\newcommand{\eps}{\varepsilon}
\newcommand{\nsc}{\mathrm{nsc}}
\newcommand{\isc}{\mathrm{isc}}
\renewcommand{\sc}{\mathrm{sc}}
\newtheorem{theorem}{Theorem}
\newtheorem{lemma}[theorem]{Lemma}
\newtheorem{proposition}[theorem]{Proposition}
\newenvironment{definition}[1][Definition]{\begin{trivlist}
\item[\hskip \labelsep {\bfseries #1}]}{\end{trivlist}}
\newenvironment{example}[1][Example]{\begin{trivlist}
\item[\hskip \labelsep {\bfseries #1}]}{\end{trivlist}}
\title{Operations on Automata with All States Final}
\author{Krist\'{i}na \v{C}evorov\'a
\institute{Mathematical Institute,\\
              Slovak Academy of Sciences,\\
              Bratislava, Slovakia}
\email{cevorova@mat.savba.sk}
\and
Galina Jir\'askov\'a\thanks{Research supported by grant APVV-0035-10.}
\quad\quad Peter Mlyn\'ar\v cik \quad\quad Mat\'u\v s Palmovsk\'y
\institute{Mathematical Institute,\\
              Slovak Academy of Sciences,\\
              Ko\v{si}ce, Slovakia}
\email{ jiraskov@saske.sk \quad mlynarcik1972@gmail.com \quad matp93@gmail.com}
\and
Juraj \v{S}ebej\thanks{Research supported by grant  VEGA 1/0479/12.}
\institute{Institute of Computer Science,\\
              \v Saf\'arik University,\\
              Ko\v{s}ice, Slovakia}
\email{juraj.sebej@gmail.com}
}
\begin{document}
\maketitle

\begin{abstract}
We study the complexity of basic regular operations on languages
represented by incomplete deterministic or nondeterministic automata,
in which all states are final. 
Such languages are known to be prefix-closed.
We get  tight bounds on both incomplete and nondeterministic
state complexity of complement, intersection, union, 
concatenation,  star, and reversal on prefix-closed languages. 
\end{abstract}

\section{Introduction}

A language $L$ is prefix-closed
if $w\in L$ implies that every prefix of $w$ is in $L$.
It is known that a regular language is prefix-closed
if and only if it is accepted by a nondeterministic finite automaton (NFA)
with all states final \cite{krs09}.
In the minimal incomplete deterministic finite automaton (DFA)
for a prefix-closed language, all the states are final as well.

The authors of \cite{krs09} examined several questions
concerning NFAs with all states final.
They proved that the inequivalence problem for NFAs with all states final 
is PSPACE-complete in the binary case,
but polynomially solvable in the unary case.
Next, they showed that minimizing a binary NFA with all states final is 
PSPACE-hard, and that deciding whether a given NFA accepts a language that is not prefix-closed is PSPACE-complete, while the same problem for DFAs
can be solved in polynomial time.
The NFA-to-DFA conversion and complementation of NFAs with all states final
 have been also considered in \cite{krs09},
and the tight bound $2^n$ for the first problem, and the lower bound $2^{n-1}$
for the second one have been obtained.

The quotient complexity of prefix-closed languages
has been studied  in \cite{bjz14}.
The quotient of a language $L$ by the string $w$
is the set $L_w=\{x\mid wx\in L\}$.
The quotient complexity of a language $L$, $\kappa(L)$,
is the number of distinct quotients of $L$.
Quotient complexity  is defined for any language, 
and it is finite if and only if the language is regular.
The quotient automaton of a regular language $L$
is the DFA $(\{L_w\mid w\in\Sigma^*\},\Sigma,\cdot,L_\eps,F)$,
where $L_w\cdot a = L_{wa}$, and a quotient $L_w$ is final
if it contains the empty string. 
The quotient automaton of $L$ is a minimal complete DFA for $L$,
so quotient complexity is the same as the state complexity of $L$
which is defined as the number of states in the minimal DFA for $L$.
In \cite{bjz14}, the tight bounds on the quotient complexity
of basic regular operation have been obtained,
and to prove upper bounds, the properties of quotients have been used
rather than automata constructions. 

Automata with all states final 
represent  systems,  for example,  production lines,
and their intersection or parallel composition
represents the composition of these systems \cite{ma13}. 
A question that arises here is, 
whether the complexity of intersection
of automata with all states final
is the same as in the general case of arbitrary DFAs or NFAs.
At the first glance, it seems that this complexity could be smaller.
Our first result shows that this is not the case.
We show that both incomplete and nondeterministic state complexity
of intersection on prefix-closed languages is given by the function $m n$,
which is the same as in the general case of regular languages.

In the deterministic case, to have all the states final, 
we have to consider incomplete deterministic automata
because otherwise, the complete automaton 
with all states final would accept
the language consisting of all the strings over an input alphabet.
Notice that the model of incomplete deterministic automata
has been considered already by Maslov \cite{ma70}.
The same model has been used in the study of the complexity
of the shuffle operation \cite{csy02}; 
here, the complexity on complete DFAs is not known yet.

We next study the complexity of complement, union,
concatenation, square, star, and reversal on languages
represented by incomplete DFAs or NFAs with all states final.
We get  tight bounds in both nondeterministic
and incomplete deterministic cases.
In the nondeterministic case, all the bounds are the same 
as in the general case of regular languages,
except for the bound for star that is $n$ instead of  $n+1$.
However, to prove the tightness of these bounds,
we usually  use  larger alphabets 
than in the general case of regular languages where all the upper bounds
can be  met by binary languages \cite{hk03,ji05}.

To get lower bounds, we use a fooling-set lower-bound method
\cite{auy83,bi92,bi93,gs96,hr97}.
In the case of union and reversal,
the method does not work since it provides a lower bound
on the size of NFAs with multiple initial states.
Since the nondeterministic state complexity of a regular language
is defined using a model of NFAs with a single initial state \cite{hk03},
we have to use a modified fooling-set technique to get the
tight bounds $m+n+1$ and $n+1$ for union and reversal,
respectively.

In the case of incomplete deterministic finite automata,
the tight bounds for complement, union,
concatenation,  star, and reversal are
$n+1, mn+m+n, m\cdot2^{n-1}+2^n-1$, 
$2^{n-1}$, and
$2^n-1$,
respectively.
To define worst-case examples,
we use a binary alphabet for union, star, and reversal,
and a ternary alphabet for concatenation.

The paper is organized as follows.
In the next section, we give some basic definitions
and preliminary results.
In Sections~\ref{se:co} and \ref{se:bo}, we  study boolean operations.
Concatenation is discussed in Section~\ref{se:cs},
and star and reversal in Section~\ref{se:sr}.
The last section contains some concluding remarks.

\section{Preliminaries}

In this section, we recall some basic definitions
and preliminary results.
For details and all unexplained notions,
the reader may refer to \cite{si97}.

A \emph{nondeterministic finite automaton} (NFA)
is a quintuple $A=(Q,\Sigma,\delta,I,F)$,
where $Q$ is a finite set of states,
$\Sigma$ is a finite alphabet,
$\delta\colon Q\times\Sigma\to 2^Q$ is the transition function
which is extended to the domain $2^Q\times\Sigma^*$
in the natural way,
$I\subseteq Q$ is the set of initial states, and
$F\subseteq Q$ is the set of final states.
The language accepted by $A$ is the set
$L(A)=\{w\in\Sigma^*\mid \delta(I,w)\cap F \neq \emptyset\}$.

The \emph{nondeterministic state complexity}
 of a regular language $L$, $\nsc(L)$,
is the smallest number of states in any NFA  
with a \emph{single initial state}  recognizing $L$.

An NFA $A$ is \emph{incomplete deterministic} (DFA)
if $|I|=1$ and $|\delta(q,a)|\le1$
for each $q$ in $Q$ and each $a$ in $\Sigma$.
In such a case, we write  $\delta(q,a)=q'$ 
instead of $\delta(q,a)=\{q'\}$.
A non-final state $q$ of a DFA is called a \emph{dead} state
if $\delta(q,a)=q$ for each symbol $a$ in $\Sigma$.

The \emph{incomplete state complexity} of a regular language $L$,
$\isc(L)$,
is the smallest number of states in any incomplete DFA recognizing $L$.
An incomplete DFA is minimal (with respect to the number of states)
if it does not have any dead state,
all its states are reachable,
and no two distinct states are equivalent.

Every NFA $A=(Q,\Sigma,\delta,I,F)$ can be converted
to an equivalent DFA $A'=(2^Q,\Sigma,\cdot,I,F')$,
where $R\cdot a =\delta(R,a)$ and $F'=\{R\in 2^Q\mid R\cap F\neq\emptyset\}$.
The DFA $A'$ is called 
the \emph{subset automaton} of the NFA~$A$.
The subset automaton need not be minimal
since some of its states may be unreachable or equivalent.
However, if for each state $q$ of an NFA $A$,
there exists a string $w_q$ that is accepted by $A$
only from the state~$q$,
then the subset automaton of the NFA $A$
does not have equivalent states 
since if two  subsets of the subset automaton
differ in a state $q$, 
then they are distinguishable by $w_q$.

To prove the minimality of NFAs,
we use a fooling set lower-bound technique, 
see~\cite{auy83,bi92,bi93,gs96,hr97}.

\begin{definition}\label{def:fool}
 A set of pairs of strings $\{(x_1,y_1),(x_2,y_2),\ldots,(x_n,y_n)\}$
 is called a \emph{fooling set} for a language $L$ if
 for all $i,j$ in $\{1,2,\ldots,n\}$, the following two conditions hold:\\
 \hglue10pt
 \textbf{\emph{(F1)}} $x_iy_i\in L$, and \\ 
 \hglue10pt
 \textbf{\emph{(F2)}} if $i\neq j$, then  $x_iy_j\notin L$ or $x_jy_i \notin L$. 
\end{definition}

It is well known that the size of a fooling set for a regular language 
provides a lower bound on the number of states 
in any NFA (with multiple initial states)
for the language.
The argument is simple.
Fix the accepting computations of any NFA on strings $x_i y_i$ and $x_j y_j$.
Then, the states on these computations reached after reading $x_i$ and $x_j$
must be distinct, otherwise the NFA
accepts both $x_i y_j$ and $x_j y_i$ for two distinct pairs.
Hence we get the following observation.

\begin{lemma}[\cite{bi93,gs96,hr97}]
\label{le:fool}
  Let $\mathcal{F}$ be a fooling set for a language $L$.
 Then every NFA (with multiple initial states) for the language $L$
 has at least $|\mathcal{F}|$ states.
\qed
\end{lemma}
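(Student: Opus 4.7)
The plan is to associate to each pair in the fooling set a distinct state of the NFA, which will immediately force the lower bound. First I would fix an arbitrary NFA $A=(Q,\Sigma,\delta,I,F)$ accepting $L$ (with multiple initial states allowed). For each pair $(x_i,y_i) \in \mathcal{F}$, condition \textbf{\emph{(F1)}} guarantees $x_iy_i \in L$, so by the definition of acceptance there must exist an initial state $s_i \in I$, an intermediate state $q_i \in \delta(s_i,x_i)$, and a final state $f_i \in F$ with $f_i \in \delta(q_i,y_i)$. I would fix one such triple $(s_i,q_i,f_i)$ for every $i$, thereby earmarking a particular accepting computation of $A$ on $x_iy_i$.

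The crux is to argue that the chosen states $q_1,\ldots,q_n$ are pairwise distinct, which yields $|Q| \geq n = |\mathcal{F}|$. I would proceed by contradiction: assume $q_i = q_j$ for some $i \neq j$. Because $s_i \in I$ reaches $q_i$ on $x_i$ and $q_j$ reaches $f_j \in F$ on $y_j$, splicing the two fixed computations at the common state $q_i = q_j$ produces a valid accepting computation of $A$ on the string $x_iy_j$; hence $x_iy_j \in L$. The symmetric splice shows $x_jy_i \in L$, contradicting condition \textbf{\emph{(F2)}}.

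There is no real obstacle here; the only point that deserves attention is that the lemma is stated for NFAs with \emph{multiple} initial states, so one must verify that the spliced computation still starts from a state of $I$. This is automatic because the initial state $s_i$ of the fixed computation for the $i$-th pair is already in $I$, and the paper's acceptance condition $\delta(I,w) \cap F \neq \emptyset$ only asks for the existence of \emph{some} initial state witnessing acceptance. Once distinctness of the $q_i$ is established, the bound $|Q| \geq |\mathcal{F}|$ follows, completing the proof.
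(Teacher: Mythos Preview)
Your proof is correct and follows exactly the argument the paper sketches in the paragraph preceding the lemma: fix an accepting computation on each $x_iy_i$, take the state $q_i$ reached after $x_i$, and show the $q_i$ are pairwise distinct because otherwise splicing would accept both $x_iy_j$ and $x_jy_i$, violating (F2). Your extra care about the initial state $s_i\in I$ in the multiple-initial-state setting is a reasonable elaboration of the same idea.
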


The next lemma  shows that sometimes,
if we insist on having a single initial state in an NFA,
one more state is necessary.
It can be used in the case of union, reversal, cyclic shift \cite{jo08},
and AFA-to-NFA conversion \cite{ji12}.
In each of these cases,  NFAs
with a single initial state
require one more state than NFAs
with multiple initial states. 
For the sake of completeness,
we recall the proof of the lemma here.

\begin{lemma}[\cite{jm11}]\label{le:fool1}
 Let $\mathcal{A}$ and $\mathcal{B}$ be sets of pairs of strings
 and let $u$ and $v$ be two strings such that
 $\mathcal{A}\cup\mathcal{B}$,
 $\mathcal{A}\cup\{(\eps,u)\}$, and
 $\mathcal{B}\cup\{(\eps,v)\}$ are fooling sets for a  language $L$.
 Then every NFA with a single initial state for the language $L$
 has at least $|\mathcal{A}|+|\mathcal{B}|+1$ states.
\end{lemma}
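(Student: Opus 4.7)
The plan is to refine the standard fooling-set argument so that the initial state $q_0$ is forced to be distinct from all the $|\mathcal{A}|+|\mathcal{B}|$ states that $\mathcal{A}\cup\mathcal{B}$ already separates. Consider any NFA $N$ with a single initial state $q_0$ accepting $L$. For every pair $(x_i,y_i)\in\mathcal{A}\cup\mathcal{B}$, fix an accepting computation of $N$ on $x_iy_i$ and let $p_i$ denote the state visited after reading $x_i$. By Lemma~\ref{le:fool} applied to the fooling set $\mathcal{A}\cup\mathcal{B}$, the states $p_i$ are pairwise distinct, so they already account for $|\mathcal{A}|+|\mathcal{B}|$ states of $N$.

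It remains to produce one more state, and the natural candidate is $q_0$ itself. Since $(\eps,u)$ lies in a fooling set for $L$, the string $u$ belongs to $L$; fix any accepting computation of $N$ on $u$, whose state after reading $\eps$ is $q_0$. Suppose, for contradiction, that some $p_i$ with $(x_i,y_i)\in\mathcal{A}$ equals $q_0$. Splicing the two computations at this common state shows that $N$ accepts $x_i u$ (run $x_i$ from $q_0$ to $q_0=p_i$, then continue with $u$ to a final state) and also accepts $\eps\cdot y_i=y_i$ (run $\eps$ from $q_0$ to $q_0=p_i$, then continue with $y_i$ to a final state). This contradicts condition \emph{(F2)} for the fooling set $\mathcal{A}\cup\{(\eps,u)\}$ applied to the two pairs $(\eps,u)$ and $(x_i,y_i)$. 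The completely symmetric argument, using an accepting computation of $N$ on $v$ and the fooling set $\mathcal{B}\cup\{(\eps,v)\}$, rules out $p_i=q_0$ for $(x_i,y_i)\in\mathcal{B}$. Therefore $q_0$ is distinct from all the $p_i$, and $N$ has at least $|\mathcal{A}|+|\mathcal{B}|+1$ states.

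The only delicate point, and the reason the lemma fails for NFAs with multiple initial states, is that the \emph{same} state $q_0$ must serve as the starting point of both spliced computations — the one built from $u$ and the one built from $v$. Once this observation is in place, the proof is a direct combination of three applications of the standard fooling-set lemma, so I do not expect any further obstacle.
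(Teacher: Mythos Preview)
Your proposal is correct and follows essentially the same approach as the paper's proof: fix states $p_i$ from accepting computations on the strings $x_iy_i$, use the fooling set $\mathcal{A}\cup\mathcal{B}$ to get their pairwise distinctness, and then use $\mathcal{A}\cup\{(\eps,u)\}$ and $\mathcal{B}\cup\{(\eps,v)\}$ to separate the single initial state from each $p_i$. Your explicit splicing/contradiction argument merely unpacks what the paper states in one sentence, and your closing remark about why a single initial state is essential is a helpful addition but not a departure from the paper's method.
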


\begin{proof}
 Consider an NFA for a language $L$, and let
 $\mathcal{A}=\{(x_i,y_i)\mid i=1,2,\ldots,m\}$ and
 $\mathcal{B}=\{(x_{m+j},y_{m+j})\mid j=1,2,\ldots,n\}$.
 Since the strings $x_k y_k$ are in $L$,
 we fix an accepting computation of the NFA  on each string $x_k y_k$.
 Let $p_k$ be the state on this computation
 that is reached after reading $x_k$.
 Since $\mathcal{A}\cup\mathcal{B}$ is a fooling set for $L$,
 the states $p_1$, $p_2$,~\ldots, $p_{m+n}$ are pairwise distinct.
 Since $\mathcal{A}\cup\{(\eps,u)\}$ is a fooling set,
 the  initial state is distinct from all the states $p_1$, $p_2$, \ldots, $p_m$.
 Since $\mathcal{B}\cup\{(\eps,v)\}$ is a fooling set,
 the (single) initial state is also distinct
 from all the states $p_{m+1}$, $p_{m+2}$, \ldots, $p_{m+n}$.
 Thus the NFA has at least $m+n+1$ states.
\end{proof}

\begin{example}\label{ex:fooling}
  Let $K=(a^3)^*$ and $L=(b^3)^*$.
 Then $\nsc(K)=3$ and $\nsc(L)=3$,
 and the language $K\cup L$ is accepted
 by a 6-state NFA with two initial states.
 Therefore, we cannot expect that we will be able to find a fooling set
 for $K\cup L$ of size $7$.
 However, every NFA with a \emph{single} initial state for the language $K\cup L$
 requires at least $7$ states since Lemma~\ref{le:fool1}
 is satisfied for the language $K\cup L$
 with 
 \begin{align*}
   \mathcal{A} &=\{(a,a^2),(a^2,a),(a^3,a^3)\},\\
   \mathcal{B} &=\{(b,b^2),(b^2,b),(b^3,b^3)\},\\
   u                &=b^3, \text{ and} \\
   v                &=a^3.
 \end{align*}
\end{example}

If $w=uv$ for strings $u$ and $v$, then $u$ is a \emph{prefix} of $w$.
A language $L$ is \emph{prefix-closed}
if $w\in L$ implies that every prefix of $w$ is in $L$.
The following observations are easy to prove.

\begin{proposition}[\cite{krs09}]
 A regular language is prefix-closed if and only if
 it is accepted by some NFA with all states final.
\qed
\end{proposition}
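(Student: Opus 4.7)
The plan is to prove the two directions separately, with the forward direction (all-states-final implies prefix-closed) being the trivial one and the converse requiring a small NFA construction that exploits prefix-closure.

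First I would handle the easy direction. Suppose $L = L(A)$ for some NFA $A = (Q, \Sigma, \delta, I, Q)$ with every state final. Given $w \in L$ and a prefix $u$ of $w$, write $w = uv$. Fix an accepting computation for $w$: it passes through some state $q \in \delta(I, u)$ after reading $u$. Because $q \in F = Q$, this state is already final, so $u \in L(A) = L$. Hence $L$ is prefix-closed.

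For the converse, let $L$ be a prefix-closed regular language and let $A = (Q, \Sigma, \delta, I, F)$ be any NFA with $L(A) = L$. I would form $A'$ by keeping only the co-accessible states of $A$, that is, $Q' = \{q \in Q \mid \delta(q, v) \cap F \ne \emptyset \text{ for some } v \in \Sigma^*\}$, with $I' = I \cap Q'$, the induced transition function, and $F' = Q'$ (i.e.\ declare every remaining state final). Clearly $L \subseteq L(A')$, since every accepting computation in $A$ already lies entirely in $Q'$ and $F \subseteq Q' = F'$. For the reverse inclusion, take $w \in L(A')$; then there is a path in $A'$ from some $q_0 \in I'$ to some $q \in Q'$ on input $w$. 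By definition of $Q'$, there exists $v$ with $\delta(q, v) \cap F \ne \emptyset$, so $wv$ is accepted by $A$, giving $wv \in L$. Prefix-closure of $L$ then yields $w \in L$, as required.

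The only subtle point — and the closest thing to an obstacle — is the degenerate case $L = \emptyset$, which is vacuously prefix-closed. Here the construction above produces $Q' = \emptyset$, which is acceptable under the formal definition of an NFA (empty state set, empty initial set, vacuously ``all states final''); one could alternatively note that this boundary case is easily handled by a direct NFA. Everything else reduces to the straightforward reachability/co-reachability trim plus the one-line use of prefix-closure to justify making every surviving state final.
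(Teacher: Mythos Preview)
Your proof is correct. Note, however, that the paper does not actually prove this proposition: it is stated as a known result from~\cite{krs09} and closed with a \verb|\qed|, so there is no paper proof to compare against. Your argument is the standard one (trim to co-accessible states and make every surviving state final, then use prefix-closure for the nontrivial inclusion), and it goes through cleanly, including your handling of the degenerate empty-language case.
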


\begin{proposition}
 Let $A$ be a minimal incomplete DFA for a language $L$.
 Then the language $L$ is prefix-closed 
 if and only if all the states of the DFA $A$ are final.
\qed
\end{proposition}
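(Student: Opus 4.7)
The proposition is an ``iff'', so I would handle the two directions separately.

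For the backward direction ($\Leftarrow$), assuming every state of $A$ is final, I would take $w\in L$ and any prefix $u$ of $w$, write $w=uv$, and observe that the accepting computation of $A$ on $w$ must pass through the state $\delta(q_0,u)$; hence $\delta(q_0,u)$ is defined and, by assumption, final, so $u\in L$. This is just an unwinding of definitions.

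For the forward direction ($\Rightarrow$), I would use the minimality of $A$. Fix a state $q$. By reachability, there is some $u$ with $\delta(q_0,u)=q$. The key auxiliary fact is that the language accepted from $q$, namely $L_q=\{v\mid\delta(q,v)\in F\}$, is nonempty: for if $q$ were non-final with $L_q=\emptyset$, then $q$ would be useless, and together with the minimality conditions (no dead state, no two equivalent states, all states reachable) this cannot happen — two non-final states with empty accepted language would be equivalent, and a single such state could be turned into a dead state by redirecting its outgoing transitions to itself, contradicting the no-dead-state clause. Granting $L_q\ne\emptyset$, pick $v\in L_q$; then $uv\in L$, and prefix-closedness forces $u\in L$, i.e., $q=\delta(q_0,u)\in F$.

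The main obstacle, and the only step requiring care, is the auxiliary claim $L_q\ne\emptyset$, because the paper's definition of a dead state is narrow (non-final with self-loops on every letter). A non-final state with few or no outgoing transitions and empty accepted language is not literally a dead state under this definition, so I would need to argue that such a state can always be transformed — by redirecting its useless transitions to itself and deleting states that become unreachable — into a configuration that violates one of the three listed minimality conditions. Once this small lemma is in place, both directions of the proposition follow at once.
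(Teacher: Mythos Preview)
The paper states this proposition without proof --- the \qed immediately follows the statement --- so there is no argument of the paper's to compare against. Your proof is correct in both directions.

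For the auxiliary claim $L_q\neq\emptyset$ in the forward direction, your transformation argument works but is slightly indirect: after redirecting $q$'s outgoing transitions to itself you obtain an automaton $A'$ different from $A$, so the no-dead-state clause (which is a hypothesis on $A$, not on $A'$) is not literally contradicted; you still need one more step, namely that the dead state in $A'$ can be deleted, yielding an equivalent automaton strictly smaller than $A$. It is cleaner to delete $q$ from $A$ outright: make every transition into $q$ undefined and remove $q$ from the state set. Since $L_q=\emptyset$, any computation that reached $q$ was already doomed to reject, so the resulting incomplete DFA accepts the same language with one fewer state, contradicting minimality with respect to the number of states directly. This bypasses both the paper's narrow definition of ``dead state'' and your worry about whether the three listed conditions form an exact characterization of minimality.
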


\section{Complementation}
\label{se:co}

If $L$ is a language over an alphabet $\Sigma$,
then the complement of $L$ is the language $L^c=\Sigma^*\setminus L$.
If $L$ is accepted by a minimal complete DFA $A$,
then we can get a minimal DFA for $L^c$
from the DFA $A$ by interchanging the final and non-final states.
In the case of incomplete DFAs, we first have to add a dead state,
that is, a non-final state which goes to itself on each input,
and let all the undefined transitions go to the dead state.
After that, we can interchange the final and non-final states
to get a (complete)  DFA for the complement.
This gives the following result.

\begin{theorem}
 Let $n\ge1$.
 Let $L$ be a prefix-closed regular language over an alphabet $\Sigma$
 with $\isc(L)=n$.
 Then $\isc(L^c)\le n+1$, and the bound is tight if $|\Sigma|\ge1$.
\end{theorem}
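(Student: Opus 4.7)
The upper bound is essentially given by the construction in the paragraph preceding the theorem; my plan would be to record it carefully and then check that no state in the resulting automaton is dead, so that the state count cannot be reduced by the incompleteness convention. Concretely, let $A$ be a minimal incomplete DFA for $L$ with $n$ states. Since $L$ is prefix-closed, all $n$ states of $A$ are final. Add a new state $d$ and redirect every undefined transition of $A$ to $d$, with $d$ looping to itself on every letter; this produces a complete DFA $A'$ with $n+1$ states. Interchanging final and non-final states in $A'$ yields a complete DFA $A''$ for $L^c$ with $n+1$ states: the original $n$ states become non-final and $d$ becomes final. In $A''$, no state is dead: each of the original $n$ states has (in $A$, hence in $A''$) at least one transition leaving it within those $n$ states (otherwise it would have been equal to a dead state in the minimal $A$, contradicting minimality\,/\,the all-final property), and $d$ is final so it is not dead. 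Thus viewing $A''$ as an incomplete DFA yields $\isc(L^c)\le n+1$.

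For the lower bound, I would give a unary witness, which suffices since the theorem only asks for $|\Sigma|\ge 1$. Take $\Sigma=\{a\}$ and $L=\{\eps,a,a^2,\dots,a^{n-1}\}$, accepted by the incomplete DFA on states $q_0,q_1,\dots,q_{n-1}$ with $q_i\cdot a=q_{i+1}$ for $i<n-1$ and $q_{n-1}\cdot a$ undefined, all states final. This DFA is minimal (the states are reachable, pairwise inequivalent because $q_i\cdot a^{n-1-i}$ is defined only for the appropriate $i$, and no state is dead), so $\isc(L)=n$. Now $L^c=a^n\{a\}^*$, and I would argue that any incomplete DFA for $L^c$ needs at least $n+1$ states: the $n+1$ strings $\eps,a,a^2,\dots,a^n$ must reach pairwise distinct states, since for $0\le i<j\le n$ the string $a^{n-i}$ distinguishes the states reached after $a^i$ and $a^j$ (it is accepted after $a^i$ but, for $j>i$, the shorter suffix $a^{n-j}$ already accepts, while the analogous check with $a^{n-j}$ from $a^i$ fails—formally, $a^i\cdot a^{n-i}\in L^c$ but $a^j\cdot a^{n-i}\notin L^c$ when $i<j$ since that would require fewer than $n$ $a$'s in total is wrong, so use instead $a^{n-j}$: we have $a^j\cdot a^{n-j}\in L^c$ but $a^i\cdot a^{n-j}\notin L^c$). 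Hence $\isc(L^c)\ge n+1$.

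The only genuinely delicate point is the distinguishability argument in the lower bound; I would phrase it cleanly via a Myhill–Nerode style check, observing that the quotients $(L^c)_{a^i}$ for $i=0,1,\dots,n$ are all distinct (they are $a^{n-i}\{a\}^*$ for $i\le n$, and $\{a\}^*$ for $i\ge n$), which immediately gives $n+1$ reachable, pairwise inequivalent, non-dead states in any incomplete DFA for $L^c$. Combining both bounds gives $\isc(L^c)=n+1$, proving tightness.
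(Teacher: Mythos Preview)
Your approach matches the paper's: the upper bound is exactly the completion-then-swap construction described in the paragraph preceding the theorem, and the tightness witness is the same unary language $\{a^i \mid 0 \le i \le n-1\}$. The paper merely names this witness without further argument; your Myhill--Nerode computation that the quotients $(L^c)_{a^i}=a^{n-i}a^*$ for $i=0,\dots,n$ are pairwise distinct and nonempty is the right way to finish (the earlier, self-corrected distinguishability attempt should simply be deleted).

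One correction: the paragraph arguing that $A''$ has no dead state is both unnecessary and partly wrong. It is unnecessary because exhibiting any $(n{+}1)$-state DFA for $L^c$ already yields $\isc(L^c)\le n+1$; whether it contains a dead state is irrelevant for an upper bound. It is wrong because the claim ``each of the original $n$ states has at least one transition leaving it within those $n$ states'' fails in general---your own witness state $q_{n-1}$ has no outgoing transitions in $A$---and the justification that otherwise such a state would be dead in $A$ is incoherent, since all states of $A$ are final and hence never dead. Just drop that paragraph.
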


\begin{proof}
 For tightness, we can consider the unary prefix-closed language
 $\{a^i\mid 0\le i\le n-1\}$. 
\end{proof}

If a language $L$ is represented by an $n$-state NFA,
then we first construct  the corresponding subset automaton,
and then interchange the final and non-final states to get a DFA for the language $L^c$ of at most $2^n$ states.
This upper bound on the nondeterministic state complexity of complement 
on regular languages
is know to be tight in the binary case \cite{ji05}.

For prefix-closed languages, we get the same bound,
however, to prove tightness, we use a ternary alphabet.
Whether or not the bound $2^n$ can be met by a binary language
remains open.

\begin{theorem}
 Let $n\ge2$.
 Let $L$ be a prefix-closed regular language
 over an alphabet  $\Sigma$ with $\nsc(L)=n$.
 Then $\nsc(L^c)\le 2^n$,
 and the bound is tight if $|\Sigma|\ge3$.
\end{theorem}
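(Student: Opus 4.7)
The upper bound is routine: any $n$-state NFA for $L$ has a subset automaton with at most $2^n$ states, and interchanging final and non-final states yields a DFA (hence NFA) for $L^c$ of the same size.

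For tightness I would exhibit, for every $n\ge 2$, a prefix-closed witness language $L$ over $\{a,b,c\}$ meeting both $\nsc(L)=n$ and $\nsc(L^c)=2^n$. Take $L=L(A_n)$ where $A_n$ has state set $Q=\{0,1,\dots,n-1\}$, initial state $0$, every state final, and transitions $\delta(i,a)=\{(i+1)\bmod n\}$, $\delta(0,b)=\{0,1\}$, $\delta(i,b)=\{i\}$ for $i\ge 1$, $\delta(i,c)=\{i\}$ for $i\le n-2$, and $\delta(n-1,c)=\emptyset$. A standard argument shows that $a$ and $b$ already reach every non-empty subset of $Q$ in the subset automaton, and $a^{n-1}c$ takes $\{0\}$ to $\emptyset$, so all $2^n$ subsets are reachable. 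The key observation is that $K_j:=a^{n-1-j}c$ is a \emph{selective killer} of state $j$: it rotates $j$ into position $n-1$ where $c$ erases it, while every other state $i\neq j$ survives. Concatenating such pieces, with the rotation exponents adjusted to compensate for the permutations accumulated so far, yields for every $S\subseteq Q$ a word $y_S$ with $\delta(S,y_S)=\emptyset$ and $\delta(i,y_S)\neq\emptyset$ for all $i\in Q\setminus S$. The fooling set $\{(a^i,y_{Q\setminus\{i\}}):i\in Q\}$ then certifies $\nsc(L)\ge n$, and the fooling set $\mathcal{F}=\{(x_S,y_S):S\subseteq Q\}$ for $L^c$ certifies $\nsc(L^c)\ge 2^n$ via Lemma~\ref{le:fool}: choose $x_S$ to be any word steering $0$ to $S$; then (F1) reduces to $\delta(0,x_Sy_S)=\delta(S,y_S)=\emptyset$, and for $S\neq T$ some element of $T\setminus S$ (or symmetrically $S\setminus T$) survives $y_S$, so $\delta(T,y_S)\neq\emptyset$ and hence $x_Ty_S\in L$, yielding (F2).

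The main difficulty is the bookkeeping for $y_S$: because $a$ is a permutation, each piece $a^k c$ inside $y_S$ shifts every still-alive state, so to kill the next target one must recompute the rotation exponent from the cumulative shift rather than from the original state name. Once this indexing is handled, the fooling-set conditions fall out immediately from the support of $y_S$ being exactly $S$, and the remaining checks (prefix-closure of $L$, reachability of all $2^n$ subsets, and the upper bound $\nsc(L)\le n$ from $A_n$ itself) are straightforward.
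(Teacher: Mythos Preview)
Your argument is correct and follows the same overall strategy as the paper: exhibit a ternary $n$-state NFA with all states final, show that all $2^n$ subsets are reachable in the subset automaton, and build a fooling set $\{(x_S,y_S):S\subseteq Q\}$ for $L^c$ in which $x_S$ steers the initial state to $S$ and $y_S$ is rejected by the NFA from precisely the states of $S$.

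The difference is in the witness automaton and, consequently, in how $y_S$ is produced. In the paper's NFA, both $a$ and $c$ shift each state $i<n$ to $i+1$, but at state $n$ the letter $a$ kills while $c$ wraps around to $1$; the letter $b$ shifts and additionally injects state $1$. This lets the paper write $y_S$ in closed form as a length-$n$ word $y_0y_1\cdots y_{n-1}$ with $y_i=a$ if $n-i\in S$ and $y_i=c$ otherwise: starting from any state $j$, the first $n-j$ letters carry it to state $n$, where the next letter kills it exactly when $j\in S$. This single-pass formula sidesteps the cumulative-rotation bookkeeping you correctly identify as the main nuisance in your iterative selective-killer construction. Your version is equally valid and arguably more modular (the ``kill one designated state, permute the rest'' primitive is a reusable idea), but the paper's buys an explicit, uniform-length $y_S$ with no recursion.
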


\begin{proof}
 The upper bound is the same as
 in the general case of regular languages \cite{hk03}.
 To prove tightness,
 consider the language $L$
 accepted by the NFA $N$ shown in Figure~\ref{fig:complement},
 in which state $n$ goes to the empty set on both $a$ and $b$,
 and to $\{1\}$ on $c$.
 Each other state $i$ goes to $\{i+1\}$ on both $a$ and $c$, and 
 to $\{1,i+1\}$ on~$b$.
 Our~aim is to describe a fooling set 
 $\mathcal{F}=\{(x_S,y_S)\mid S \subseteq \{1,2,\ldots,n\}\}$
 of size $2^n$ for $L^c$.

 \begin{figure}[b]\label{-----fi1}
 \centerline{\includegraphics[scale=.40]{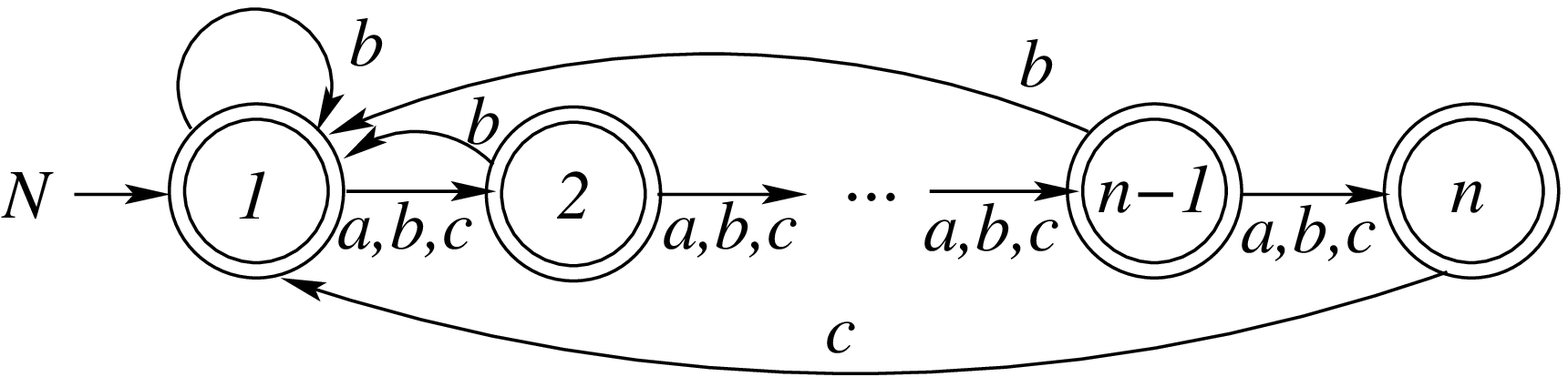}}
 \caption{The NFA $N$ of a prefix-closed language $L$ with $\nsc(L^c)=2^n$.}
 \label{fig:complement}
 \end{figure}

 First, let us show that each subset of $\{1,2,\ldots,n\}$
 is reachable in the subset automaton of the NFA $N$.
 The initial state is $\{1\}$,
 and each singleton set $\{i\}$ is reached from $\{1\}$
 by $a^{i-1}$.
 The empty set is reached from $\{n\}$ by $a$.
 The set $\{i_1,i_2,\ldots,i_k\}$ of size $k$,
 where  $2\le k\le n$ and $1\le i_1 < i_2 <\cdots < i_k\le n$,
 is reached from the set  $\{i_2-i_1,\ldots,i_k-i_1\}$ of size $k-1$
 by the string $b a^{i_1-1}$.
 This proves reachability by induction.
 Now, define $x_S$ as the string, by which the initial state $1$
 of the NFA $N$ goes to the set $S$.

 Next, for a subset $S$ of $\{1,2,\ldots,n\}$,  define the string $y_S$ 
 as the string $y_S=y_0y_1\cdots y_{n-1}$
 of length $n$, where
 $$
     y_i = \begin{cases}
             a, &\text{if $n-i\in S$,}\\
             c, &\text{if $n-i\notin S$.}
       \end{cases}
 $$
 We claim that the string $y_S$ is rejected by the NFA $N$
 from each state in $S$ and accepted from each state that is not in $S$.
 Indeed, if $i$ is a state in $S$,
 then $y_{n-i}=a$ and $y_S=u a v$ with $u=y_0y_1\cdots y_{n-i-1}$
 and $v=y_{n-i+1}y_{n-i+2}\cdots y_{n-1}$.
 Hence $|u|=n-i$,
 which means that the state $i$ goes to $\{n\}$ by $u$
 since both $a$ and $c$ move each state $q$ to state $q+1$.
 However, in state $n$ the NFA $N$ cannot read $a$,
 and therefore the string $y_S=uav$ is rejected from $i$.
 On the other hand,
 if $i\notin S$, then $y_{n-i}=c$,
 and the string $y_S=u c v$ with $|u|=n-i$ and $|v|=i-1$
 is accepted from $i$ through the computation
 $i\xrightarrow{u} n \xrightarrow{c} 1 \xrightarrow{v} i$.

 Now, we are ready to prove that the set of pairs of strings
 $ \mathcal{F} = \{(x_S,y_S) \mid S\subseteq\{1,2,\ldots,n\}\}$
 is a fooling set for the language $L^c$.

 (F1) By $x_S$, the initial state 1 goes to the set $S$.
 The string $y_S$ is rejected by $N$ from each state in $S$.
 It follows that the NFA $N$ rejects the string $x_S y_S$.
 Thus the string $x_S y_S$ is in $L^c$.

 (F2) Let $S\neq T$. Then without loss of generality,
 there is  a state $i$ such that $i\in S$ and $i\notin T$.
 By $x_S$, the initial state $1$ goes to $S$, so it also goes to the state $i$.
 Since $i\notin T$,
 the string $x_T$ is accepted by $N$ from $i$.
 Therefore, the NFA $N$ accepts the string $x_S y_T$,
 and so this string is not in $L^c$.

 Hence $\mathcal{F}$ is a fooling set for $L^c$ of size $2^n$.
 By Lemma~\ref{le:fool}, we have  $\nsc(L^c)\ge 2^n$.
\end{proof}

\section{Intersection and Union}
\label{se:bo}

In this section, we study the incomplete and nondeterministic state complexity
of intersection and union of prefix-closed languages.
If regular languages $K$ and $L$ are accepted
by $m$-state and $n$-state NFAs, respectively,
then the language $K\cap L$
is accepted by an NFA of at most $m n$ states,
and this bound is known to be tight in the binary case \cite{hk03}.
Our first result shows that the bound $m n$ can be met
by binary prefix-closed languages.
Then, using this result, we  get the same bound on
the incomplete state complexity of intersection on prefix-closed languages.

\begin{theorem}
\label{thm:int_nsc}
 Let $K$ and $L$ be prefix-closed languages
 over an alphabet $\Sigma$ with $\nsc(K)=m$ and \mbox{$\nsc(L)=~n$.}
 Then $\nsc(K\cap L)\le m n$,
 and the bound is tight if $|\Sigma|\ge2$.
\end{theorem}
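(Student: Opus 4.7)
The plan is to get the upper bound via the classical product NFA construction and to match it using the simplest possible binary prefix-closed witnesses, with a fooling-set lower bound.

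For the upper bound, I would take NFAs $A_K$ and $A_L$ with $m$ and $n$ states respectively (all states final, a single initial state, as in the paper's definition of $\nsc$) and form the product NFA on $Q_{A_K}\times Q_{A_L}$ with coordinate-wise transitions; the single initial pair and all-final state set give $mn$ states accepting $K\cap L$, so $\nsc(K\cap L)\le mn$.

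For tightness, the witness languages I would propose are
\[
K=\{w\in\{a,b\}^*:|w|_a\le m-1\}\qquad\text{and}\qquad L=\{w\in\{a,b\}^*:|w|_b\le n-1\}.
\]
Both are immediately prefix-closed, and each has nondeterministic state complexity exactly $m$ (respectively $n$), using the obvious $m$-state NFA that counts $a$'s from $0$ to $m-1$ with $b$-self-loops, matched from below by the fooling set $\{(a^i,a^{m-1-i}):0\le i\le m-1\}$ via Lemma~\ref{le:fool} (for $L$ the roles of $a$ and $b$ are swapped). The intersection is the finite language $K\cap L=\{w:|w|_a\le m-1\text{ and }|w|_b\le n-1\}$.

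The core of the argument is then a fooling set of size $mn$ for this intersection, namely
\[
\mathcal{F}=\bigl\{(a^ib^j,\,a^{m-1-i}b^{n-1-j}):0\le i\le m-1,\;0\le j\le n-1\bigr\}.
\]
Condition (F1) is immediate, since each $x_{i,j}y_{i,j}$ contains exactly $m-1$ letters $a$ and $n-1$ letters $b$. For (F2) with $(i,j)\ne(i',j')$: if $i\ne i'$, say $i>i'$, then $x_{i,j}y_{i',j'}$ has $m-1+(i-i')>m-1$ letters $a$ and so leaves $K$; if $i=i'$ but $j\ne j'$, say $j>j'$, then $x_{i,j}y_{i',j'}$ has more than $n-1$ letters $b$ and so leaves $L$. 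Lemma~\ref{le:fool} then yields $\nsc(K\cap L)\ge mn$, matching the upper bound. No serious obstacle arises; the only care needed is in the case split for (F2), which is routine.
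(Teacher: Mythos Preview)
Your proposal is correct and essentially identical to the paper's own proof: the same binary witness languages $K=\{w:\#_a(w)\le m-1\}$ and $L=\{w:\#_b(w)\le n-1\}$, the same fooling set $\{(a^ib^j,a^{m-1-i}b^{n-1-j})\}$, and the same case split for (F2). You add the minor extra step of explicitly verifying $\nsc(K)=m$ and $\nsc(L)=n$ via fooling sets, which the paper leaves implicit.
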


\begin{proof}
 The upper bound is the same as for regular languages \cite{hk03}.
 For tightness, consider prefix-closed binary languages
 $K = \{w\in\{a,b\}^*\mid \#_a(w) \le m-1\}$  and
 $L = \{w\in\{a,b\}^*\mid \#_b(w) \le n-1\}$
 that are accepted by an $m$-state and an $n$-state
 incomplete DFAs $A$ and $B$, respectively, shown in Figure~\ref{fig:intersection}.

 \begin{figure}[b]\label{-----fi2}
 \centerline{\includegraphics[scale=.40]{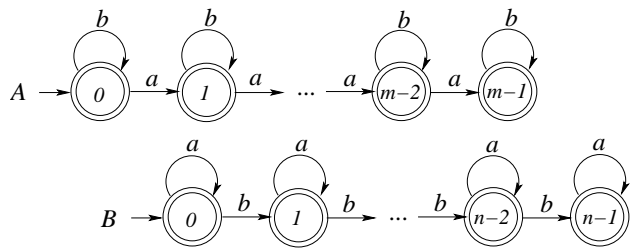}}
 \caption{The incomplete DFAs $A$ and $B$ 
              of prefix-closed languages $K$ and $L$ with $\nsc(K\cap L)=m n$.}
 \label{fig:intersection}
 \end{figure}

 Consider the set of pairs of strings
 $\mathcal{F}=\{(a^i b^j, a^{m-1-i} b^{n-1-j})
 \mid 0\le i \le m-1, 0 \le j \le n-1\}$
 of size $m n$.
 Let us show that $\mathcal{F}$
 is a fooling set for the language $K\cap L$.

 (F1) The string $a^i b^j \cdot a^{m-1-i} b^{n-1-j}$
 has exactly $m-1$ $a$'s and $n-1$ $b$'s.
 It follows that it is in $K\cap L$.

 (F2) Let $(i,j)\neq(k,\ell)$.
 If $i<k$, then the string $a^k b^{\ell}\cdot  a^{m-1-i} b^{n-1-j}$
 contains $m-1+(k-i)$ $a$'s,
 and therefore it is not in $K$.
 The case of $j<\ell$ is symmetric.

 Hence $\mathcal{F}$ is a fooling set for $K\cap L$,
 and the theorem follows.
\end{proof}

\begin{theorem}
 Let $K$ and $L$ be prefix-closed languages
 over an alphabet $\Sigma$ with $\isc(K)=m$ and \mbox{$\isc(L)=~n$.}
 Then $\isc(K\cap L)\le m n$,
 and the bound is tight if $|\Sigma|\ge2$.
\end{theorem}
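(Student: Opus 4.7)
The plan is to mimic the structure of Theorem~\ref{thm:int_nsc}, recycling its witnesses and leveraging the trivial inequality $\nsc(M)\le\isc(M)$ to get the lower bound for free.

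For the upper bound, I would use the standard product construction on incomplete DFAs. Given minimal incomplete DFAs $A=(Q_A,\Sigma,\delta_A,q_0^A,Q_A)$ for $K$ and $B=(Q_B,\Sigma,\delta_B,q_0^B,Q_B)$ for $L$ (both have all states final by the second Proposition), form the incomplete DFA with state set $Q_A\times Q_B$, initial state $(q_0^A,q_0^B)$, all states final, and transition $(p,q)\cdot a=(\delta_A(p,a),\delta_B(q,a))$ defined exactly when both components are defined. This recognizes $K\cap L$ and uses at most $mn$ states, and since all of its states are final the resulting DFA is automatically an incomplete DFA for a prefix-closed language.

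For the lower bound I would reuse the exact witnesses from Theorem~\ref{thm:int_nsc}, namely
\[
 K=\{w\in\{a,b\}^*\mid \#_a(w)\le m-1\},\qquad L=\{w\in\{a,b\}^*\mid \#_b(w)\le n-1\},
\]
which are prefix-closed. The two incomplete DFAs $A$ and $B$ depicted in Figure~\ref{fig:intersection} already have $m$ and $n$ states respectively, all reachable, all final, and pairwise inequivalent (distinguished by $a^{m-1-i}$ in $A$ and by $b^{n-1-j}$ in $B$), so $\isc(K)=m$ and $\isc(L)=n$.

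For any incomplete DFA $D$ recognizing $K\cap L$ we may view $D$ as an NFA (with a single initial state and no multiplicity in transitions), so $\nsc(K\cap L)\le\isc(K\cap L)$. Theorem~\ref{thm:int_nsc} just established $\nsc(K\cap L)\ge mn$ via the fooling set $\{(a^ib^j,a^{m-1-i}b^{n-1-j})\mid 0\le i\le m-1,\ 0\le j\le n-1\}$, hence $\isc(K\cap L)\ge mn$, matching the upper bound. There is no real obstacle here; the only thing to double check is that the same witnesses suffice, i.e.\ that the DFAs $A$ and $B$ are minimal as incomplete DFAs, which is immediate from the distinguishing strings above.
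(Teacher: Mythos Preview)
Your proposal is correct and matches the paper's argument essentially verbatim: the paper uses the same incomplete product automaton for the upper bound, reuses the witnesses $K,L$ from Theorem~\ref{thm:int_nsc}, and invokes $\nsc(K\cap L)=mn$ together with $\nsc\le\isc$ for the lower bound. The only addition you make is the explicit check that $\isc(K)=m$ and $\isc(L)=n$, which the paper takes for granted.
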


\begin{proof}
 Let $A=(Q_A,\Sigma,\delta_A,s_A,Q_A)$ and
 $B=(Q_B,\Sigma,\delta_B,s_B,Q_B)$
 be incomplete DFAs for  $K$ and $L$, respectively.
 Define an incomplete product automaton
 $M=(Q_A\times Q_B, \Sigma, \delta, (s_A,s_B), Q_A\times Q_B)$,
 where 
 $$
     \delta((p,q),a) = \begin{cases}
            (\delta_A(p,a),\delta_B(q,a)),
                 & \text{if both $\delta_A(p,a)$  and $\delta_B(q,a)$ are defined,}\\
          \text{undefined},  &\text{otherwise}.
   \end{cases}
 $$
 The  DFA $M$ accepts the language $K\cap L$.
 This gives the upper bound $m n$.
 For tightness, consider the same languages $K$ and $L$
 as in the proof of the previous theorem.
 Notice that $K$ and $L$ are accepted by  $m$-state and $n$-state
 incomplete DFAs, respectively.
 We have shown that nondeterministic state complexity of their intersection
 is $m n$. It follows that the incomplete state complexity is also at least $m n$.
\end{proof}

Our next result
on the incomplete state complexity 
of union on prefix-closed languages
can be derived from the result on the quotient complexity of union in \cite{bjz14}.
For the sake of completeness, we restate it in terms of incomplete complexities,
and recall the proof.

\begin{theorem}
  Let $K$ and $L$ be prefix-closed languages
 over an alphabet $\Sigma$ with $\isc(K)=m$ and \mbox{$\isc(L)=~n$.}
 Then $\isc(K\cup L)\le m n + m +n$,
 and the bound is tight if $|\Sigma|\ge2$.
\end{theorem}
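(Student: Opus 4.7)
The plan is a product construction for the upper bound and a direct quotient analysis of concrete binary witnesses for the lower bound.

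For the upper bound, let $A=(Q_A,\Sigma,\delta_A,s_A,Q_A)$ and $B=(Q_B,\Sigma,\delta_B,s_B,Q_B)$ be minimal incomplete DFAs for $K$ and $L$. I complete each one by adding a fresh non-final sink state $\bot_A$, respectively $\bot_B$, that absorbs every previously undefined transition, producing complete DFAs with $m+1$ and $n+1$ states. The usual cross-product DFA $M$ on $(Q_A\cup\{\bot_A\})\times(Q_B\cup\{\bot_B\})$, in which $(p,q)$ is final whenever $p\in Q_A$ or $q\in Q_B$, accepts $K\cup L$. Its only non-final state is $(\bot_A,\bot_B)$, whose outgoing transitions are all self-loops, so it is dead and can be discarded, leaving an incomplete DFA for $K\cup L$ with $(m+1)(n+1)-1=mn+m+n$ states.

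For the lower bound I reuse the binary witnesses from Theorem~\ref{thm:int_nsc}, namely $K=\{w\in\{a,b\}^*\mid \#_a(w)\le m-1\}$ and $L=\{w\in\{a,b\}^*\mid \#_b(w)\le n-1\}$; both are prefix-closed with $\isc(K)=m$ and $\isc(L)=n$. I will count the quotients of $K\cup L$ directly. Since $(K\cup L)_w$ depends only on the pair $(\min(\#_a(w),m),\min(\#_b(w),n))$, the candidate quotients are indexed by pairs $(i,j)\in\{0,\ldots,m-1,\bot\}\times\{0,\ldots,n-1,\bot\}$, where $\bot$ marks that the corresponding counter has been exhausted. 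The pair $(\bot,\bot)$ gives the empty quotient, i.e.\ the dead state, which is omitted from an incomplete DFA; thus exactly $mn+m+n$ states remain.

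Reachability is immediate: $(i,j)$ is reached by $a^ib^j$, $(\bot,j)$ by $a^mb^j$, and $(i,\bot)$ by $b^na^i$. For pairwise distinguishability, every two distinct candidate states can be separated by a word of the form $a^pb^q$. For instance, if $i<i'$ then $a^{m-1-i}b^n$ lies in $(K\cup L)_{a^ib^j}$ because its $a$-count is still within bound, but not in $(K\cup L)_{a^{i'}b^{j'}}$ because there both counters are exceeded. The mixed comparisons involving a $\bot$ component are handled by the same kind of witness, choosing one exponent to saturate a counter in one state while leaving the other in range.

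The main obstacle is the distinguishability bookkeeping, but since the two counters of $K\cup L$ behave independently, every comparison reduces to picking suitable values of $p$ and $q$ in the candidate witness $a^pb^q$, which is routine.
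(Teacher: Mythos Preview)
Your proof is correct and follows essentially the same approach as the paper: the same completed product construction for the upper bound, the same binary witness languages $K=\{w\mid \#_a(w)\le m-1\}$ and $L=\{w\mid \#_b(w)\le n-1\}$ for the lower bound, and the same reachability and distinguishability analysis in the $(m+1)\times(n+1)$ product (your $\bot$ symbols playing the role of the paper's added dead-state indices $m$ and $n$). The only cosmetic difference is that you phrase the lower bound as counting quotients and use $a^{m-1-i}b^n$ as a separating word, whereas the paper works directly with the product automaton and uses $a^{m-k}b^n$; both choices work for the same reason.
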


\begin{proof}
 Let $A=(\{0,1,\ldots,m-1\},\Sigma,\delta_A,0,F_A)$ and 
 $B=(\{0,1,\ldots,n-1\},\Sigma,\delta_B,0,F_B)$
 be incomplete DFAs for the languages $K$ and $L$, respectively.
 To construct a DFA for the language $K\cup L$,
 we first add the dead states $m$ and $n$ to the DFAs $A$ and $B$,
 and let go all the undefined transitions to the dead states.
 Now we construct the classic product-automaton
 from the resulting complete DFAs
 with the state set
 $\{0,1,\ldots,m\}\times\{0,1,\ldots,n\}$.
 All its states are final, except for the state $(m,n)$
 that is dead, and we do not count it.
 Hence we get the upper bound $m n + m +n$
 on the incomplete state complexity of union.

 \begin{figure}[h!]\label{-----fi3}
  \centerline{\includegraphics[scale=.40]{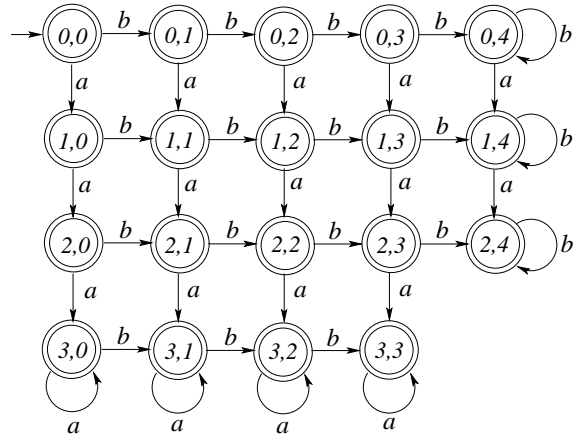}}
  \caption{The product automaton for incomplete DFAs $A$ and $B$
        from Figure~\ref{fig:intersection}; $m=3$ and $n=4$.}
  \label{fig:union_product}
  \end{figure}

 For tightness, we again consider the languages
 described in the proof of Theorem~\ref{thm:int_nsc}.
 We add the dead states $m$ and $n$ and construct the product automaton.
 The product automaton in the case of $m=3$ and $n=4$
 is shown in Figure~\ref{fig:union_product}.
 
 Each state $(i,j)$ of the product automaton is reached 
 from the initial state $(0,0)$ by the string $a^ i b^j$.
 Let $(i,j)$ and $(k,\ell)$ be two distinct states of the product automaton.
 If $i<k$, then the string $a^{m-k} b^n$ is rejected from  $(k,\ell)$
 and accepted from $(i,j)$.
 If $j<\ell$, then the string $b^{n-\ell} a^m$
 is rejected from $(k,\ell)$
 and accepted from $(i,j)$.
 Thus all the states in the product-automaton
 are reachable and pairwise distinguishable,
 and the lower bound $mn+m+n$ follows.
\end{proof}

In the nondeterministic case,
the upper bound for union on regular language is $m+n+1$,
and it is tight in the binary case \cite{hk03}.
We get the same bound for union on prefix-closed languages,
however, to define witness languages,
we use a four-letter alphabet.

\begin{theorem}
 Let $K$ and $L$ be prefix-closed languages
 over an alphabet $\Sigma$ with $\nsc(K)=m$ and \mbox{$\nsc(L)=n$.}
 Then $\nsc(K\cup L)\le m+ n+1$,
 and the bound is tight if $|\Sigma|\ge4$.
\end{theorem}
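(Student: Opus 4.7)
The plan is to handle the upper bound by the usual NFA union construction adapted to the all-states-final setting, and to obtain the lower bound via Lemma~\ref{le:fool1} applied to four-letter witnesses, since a bare fooling set (Lemma~\ref{le:fool}) cannot exceed $m+n$ for the union (compare Example~\ref{ex:fooling}).

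For the upper bound, I would take NFAs $A=(Q_A,\Sigma,\delta_A,s_A,Q_A)$ and $B=(Q_B,\Sigma,\delta_B,s_B,Q_B)$ with all states final and disjoint state sets, add a new initial state $q_0$, declare it final, and set $\delta(q_0,\sigma)=\delta_A(s_A,\sigma)\cup\delta_B(s_B,\sigma)$ for every $\sigma\in\Sigma$. Since $K$ and $L$ are prefix-closed and nonempty, $\varepsilon\in K\cup L$, so finality of $q_0$ is consistent with the all-states-final invariant, and the resulting $(m+n+1)$-state NFA recognizes $K\cup L$.

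For the lower bound, I would take $K$ to be the set of prefixes of words in $(a^{m-1}b)^*$ over $\{a,b\}$, and $L$ the set of prefixes of words in $(c^{n-1}d)^*$ over $\{c,d\}$. Both languages are prefix-closed, and each is accepted by the obvious cyclic NFA with $m$, respectively $n$, states, all final. To apply Lemma~\ref{le:fool1} I propose
\[
\mathcal{A}=\{(a^i,\,a^{m-1-i}b)\mid 1\le i\le m-1\}\cup\{(a^{m-1}b,\,a^{m-1}b)\},
\]
the symmetric set $\mathcal{B}$ over $\{c,d\}$, and the strings $u=c^{n-1}d$ and $v=a^{m-1}b$. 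The three fooling-set checks reduce to routine case analysis: in same-side products, the number of $a$'s preceding the first $b$ (respectively $c$'s preceding the first $d$) deviates from $m-1$ (respectively $n-1$), so the string fails to be a prefix of $(a^{m-1}b)^*$ (respectively $(c^{n-1}d)^*$) and is outside $K$ (respectively $L$); cross-side products mix the alphabets $\{a,b\}$ and $\{c,d\}$, so they lie in neither $K$ nor $L$; finally, for $(\varepsilon,u)$ the string $u$ lies in $L$, and every $xu$ with $(x,y)\in\mathcal{A}$ again mixes the alphabets, giving condition (F2).

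The delicate point---and the only real obstacle---is arranging that the size-$m$ fooling set $\mathcal{A}$ contains \emph{no} pair whose first coordinate is $\varepsilon$. The naive choice $\{(a^i,a^{m-1-i}b):0\le i\le m-1\}$ does include $(\varepsilon,a^{m-1}b)$, and together with the required extra pair $(\varepsilon,u)$ this collides in condition (F2) because both $\varepsilon\cdot u\in L$ and $\varepsilon\cdot a^{m-1}b\in K$. Replacing the $i=0$ entry by $(a^{m-1}b,a^{m-1}b)$ exploits the periodicity of $(a^{m-1}b)^*$ to keep (F1) and every internal (F2) condition intact, while freeing the empty-prefix slot for the Lemma~\ref{le:fool1} step; that lemma then yields $\nsc(K\cup L)\ge m+n+1$, matching the upper bound.
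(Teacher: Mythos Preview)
Your proposal is correct and follows essentially the same route as the paper: the same four-letter witnesses (prefix closures of $(a^{m-1}b)^*$ and $(c^{n-1}d)^*$), the same appeal to Lemma~\ref{le:fool1}, and the same idea of shifting the $i=0$ pair to one with first coordinate $a^{m-1}b$ so that the empty-prefix slot is free. The only cosmetic differences are that the paper uses $(a^{m-1}b,a)$ rather than $(a^{m-1}b,a^{m-1}b)$ as the replacement pair and takes $u=c$, $v=a$ instead of your $u=c^{n-1}d$, $v=a^{m-1}b$; both choices work for the same reasons.
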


\begin{proof}
 \begin{figure}[t]\label{-----fi4}
 \centerline{\includegraphics[scale=.40]{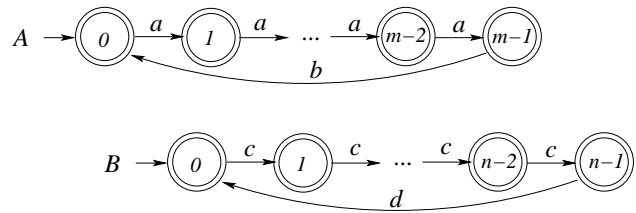}}
 \caption{The NFAs $A$ and $B$ of prefix-closed languages $K$ and $L$ with
       $\nsc(K\cup L)=m+n+1$.}
 \label{fig:union}
 \end{figure}
 The upper bound is the same as for regular languages \cite{hk03}.
 To prove tightness, let $K$ and $L$ be the prefix-closed languages 
 accepted by the NFAs $A$ and $B$, respectively,
 shown in Figure~\ref{fig:union}.
 Let
 \begin{align*}
   &\mathcal{A}= \{(a^i , a^{m-1-i} b)
              \mid i=1,2,\ldots,m-1\}\cup\{(a^{m-1}b,a)\},\\
   &  \mathcal{B}= \{(c^j , c^{n-1-j} d)
              \mid j=1,2,\ldots,n-1\}\cup\{(c^{n-1}d,c)\}.
 \end{align*}
 Let us show that $\mathcal{A} \cup \mathcal{B}$
 is a fooling set for the language $K\cup L$.

 (F1) We have $a^i \cdot a^{m-1-i} b = a^{m-1}b$
 and $c^j \cdot c^{n-1-j} d = c^{n-1} d$. Both these strings are in $K\cup L$.
 The strings $a^{m-1} b\cdot a$ 
 and $c^{n-1} d \cdot c$ are in $K\cup L$ as well.

 (F2) If $1\le i < i' \le m-1$, then the string $a^i \cdot a^{m-1-i'} b$
 is not in $K$ since $ m-1-(i'-i)<m-1$.
 Next, if $1 \le i\le m-1$, then $a^{m-1} b\cdot a^{m-1-i} b$  is not in $K$.
 The argumentation for two pairs from $\mathcal{B}$
 is similar.
 If~we concatenate the first part of a pair in  $\mathcal{A}$
 with the second part of a pair in $\mathcal{B}$,
 then we get a string that either contains all three symbols $a,c,d$,
 or contains both symbols $a$ and $d$.
 No such string is in $K\cup L$.

 Thus $\mathcal{A} \cup \mathcal{B}$
 is a fooling set for the language $K\cup L$.
 Moreover, the sets $\mathcal{A}\cup \{(\eps,c)\}$
 and  $\mathcal{B}\cup \{(\eps,a)\}$
 are fooling sets for $K\cup L$ as well.
 By Lemma~\ref{le:fool1},
 we have $\nsc(K\cup L)\ge m+n+1$.
\end{proof}

\section{Concatenation}
\label{se:cs}

In this section, we deal with the concatenation operation
on prefix-closed languages. 
We start with incomplete state complexity.
We use a slightly different ternary witness language than in \cite{bjz14},
and prove the upper bound using automata constructions.

\begin{theorem}
 Let $m,n\ge3$.
 Let $K$ and $L$ be  prefix-closed languages 
 over an alphabet $\Sigma$ with $\isc(K)=m$ and $\isc(L)=n$.
 Then $\isc(KL)\le m\cdot 2^{n-1}+2^n-1$,
 and the bound is tight if $|\Sigma|\ge3$.
\end{theorem}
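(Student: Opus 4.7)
For the upper bound, let $A=(Q_A,\Sigma,\delta_A,s_A,Q_A)$ and $B=(Q_B,\Sigma,\delta_B,s_B,Q_B)$ be minimal incomplete DFAs for $K$ and $L$ with $|Q_A|=m$ and $|Q_B|=n$. Since every state of $A$ is final, I form an NFA for $KL$ on $Q_A\sqcup Q_B$ by taking the union of the transitions of $A$ and $B$ and adding an $\varepsilon$-transition from every state of $A$ to $s_B$; all states remain final because $\varepsilon\in L$. Applying the subset construction with $\varepsilon$-closures, every reachable subset has at most one $A$-component (since $A$ is deterministic), and whenever such a component is present the subset contains $s_B$. Hence every reachable subset is either of the form $\{q\}\cup S$ with $q\in Q_A$ and $s_B\in S\subseteq Q_B$, or is a subset of $Q_B$ that arose after $\delta_A$ first became undefined. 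Counting these and discarding the dead state $\emptyset$ yields at most $m\cdot 2^{n-1}+(2^n-1)$ states.

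For the lower bound, I would exhibit ternary prefix-closed witnesses $K$ and $L$ with $\isc(K)=m$ and $\isc(L)=n$, modelled on but slightly modified from those in \cite{bjz14}, and then argue that in the subset automaton above (i) every subset of the two forms listed is reachable from $\{s_A,s_B\}$, and (ii) any two distinct reachable subsets are distinguishable. Reachability I would show by induction on $|S|$: one letter should cycle the $A$-component and simultaneously shift in $B$ so as to enlarge $S$, while another letter whose $A$-transition is selectively undefined would trigger the jump into the pure $Q_B$ family at the right moment. Distinguishability would follow from finding, for each state $q$ of $B$, a word $w_q$ accepted by $B$ only from $q$ (such words exist since $B$ has no dead state and all its states are final), together with an analogous family separating the $A$-components.

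The main obstacle is pinning down a single ternary alphabet whose three letters simultaneously achieve all these goals: one letter must act as a common cycle driving both $A$ and $B$, another must be undefined in $A$ at exactly one state (to controllably drop the $A$-component) while still acting non-trivially on $B$, and the third must provide enough asymmetry to distinguish subsets that agree on their $B$-part but differ in their $A$-part, and vice versa. Once such witnesses are fixed, both the reachability induction and the distinguishability case analysis should reduce to routine verifications.
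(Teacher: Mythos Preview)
Your upper bound is correct and matches the paper's approach: the paper builds the same NFA (phrased without $\varepsilon$-moves, by adding a transition $q\xrightarrow{a}s_B$ whenever $\delta_A(q,a)$ is defined, with initial set $\{s_A,s_B\}$ and final set $Q_B$), observes that every reachable subset contains at most one $A$-state and, if it does, also contains $s_B$, and counts exactly as you do.

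The lower bound, however, is not a proof but a plan, and you flag this yourself. The paper carries out precisely what you leave open: it specifies concrete ternary witnesses and then verifies reachability and distinguishability of all $m\cdot 2^{n-1}+2^n-1$ non-empty subsets by an explicit case analysis. In the paper's automata, $a$ fixes $q_0$ in $A$ and cycles $B$ as $j\mapsto (j{+}1)\bmod n$; $b$ sends every $q_i$ back to $q_0$ in $A$ and acts as a partial shift $0\mapsto 0$, $j\mapsto j{+}1$ (undefined at $n{-}1$) in $B$; and $c$ advances the $A$-chain $q_i\mapsto q_{i+1}$ (undefined at $q_{m-1}$) while fixing every state of $B$. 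Note that this differs from your sketch in two respects: no single letter ``cycles both $A$ and $B$,'' and the letter that drops the $A$-component ($c$) acts \emph{trivially} on $B$, not non-trivially. That triviality is in fact what makes the reachability argument clean: one first reaches every $\{q_0,0\}\cup S$ using $a,b$ alone (induction on $|S|$), then slides the $A$-component with $c^i$ without disturbing $S$, then drops it with one more $c$, and finally rotates inside $Q_B$ with $a$. Distinguishability uses the strings $b^n$ and $a^{n-i}b^n$ (accepted by $B$ only from state $0$ and only from state $i$, respectively), together with $c^m$ to strip the $A$-component and $b^n a b^n$ to separate subsets that contain some $q_i$ from those that do not.

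So the gap in your write-up is exactly the part that carries all the content of the tightness claim: the concrete definition of the witnesses and the verification. Your strategy (reachability by induction on $|S|$; distinguishability via per-state separating words) is the right one, but your guess at the roles of the three letters would need to be adjusted along the lines above before the ``routine verifications'' actually go through.
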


\begin{proof}
 Let 
 $A=(Q_A,\Sigma,\delta_A,s_A,Q_A)$ and
 $B=(Q_B,\Sigma,\delta_B,s_B,Q_B)$ be incomplete DFAs 
 with all states final
 accepting the languages $K$ and $L$,
 respectively.
 Construct an NFA $N$ for the language $KL$
 from the DFAs $A$ and $B$
 by adding the transition on a symbol $a$ 
 from a state $q$ in $Q_A$ to the initial state $s_B$ of $B$
 whenever the transition on $a$ in state $q$ is defined in $A$.
 The initial states of the NFA $N$ are $s_A$ and $s_B$,
 and the set of final states is $Q_B$.
 Each reachable subset of the subset automaton
 of the NFA $N$ contains at most one state of $Q_A$,
 and several states of $Q_B$.
 Moreover, if a state of $Q_A$ is in a reachable subset $S$,
 then $S$ must contain the state $s_B$.
 This gives the upper bound $m\cdot 2^{n-1}+2^n-1$
 on $\isc(KL)$ since the empty set is not counted.

 For tightness,
 consider the prefix-closed languages $K$ and $L$
 accepted by incomplete DFAs  $A$ and $B$, respectively,
 shown in Figure~\ref{fig:concatenation},
 in which the transitions are as follows:

 on $a$, state $q_0$ goes to itself, 
 and each state $j$ goes to $(j+1)\bmod n$;
     
 on $b$, each state $q_i$ goes to state $q_0$, state $0$ goes to itself,
 and state $j$ with $1\le j\le n-2$ goes to $j+1$;

 on $c$, each state $q_i$ with $0 \le i \le m-2$ goes to $q_{i+1}$,
 and each state $j$ goes to itself;\\
\noindent
 and all the remaining transitions are undefined.

 Construct an NFA $N$ for the language $KL$ as described above.
 Let us show that the subset automaton of the NFA $N$
 has $m\cdot 2^{n-1}+2^n-1$
 reachable and pairwise distinguishable non-empty subsets.

 \begin{figure}[t]\label{-----fi5}
  \centerline{\includegraphics[scale=.40]{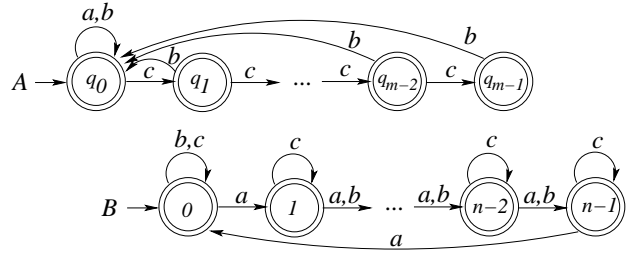}}
  \caption{The incomplete DFAs $A$ and $B$ of languages $K$ and $L$
      with $\isc(KL)=m\cdot2^{n-1}+2^n-1$.}
  \label{fig:concatenation}
  \end{figure}

 (1) First, let us show that each set $\{q_0\}\cup S$ is reachable, where
 $S\subseteq\{0,1,\ldots,n-1\}$ and $0\in S$. 
 The proof is by induction on the size of subsets.
 The set $\{q_0,0\}$ is the initial subset.
 The set $\{q_0,0, j_1,j_2,\ldots,j_k\}$ with 
 $1\le j_1<j_2<\cdots  <j_k\le n-1$ is reached from the set
 $\{q_0,0,j_2-j_1,\ldots,j_k-j_1\}$ by the string $a b^{j_1-1}$,
 and the latter set is reachable by induction.

 (2) Now, let us show that each set $\{q_i\}\cup S$,
 is reachable, where
 $1\le i \le m-1$, $S\subseteq\{0,1,\ldots,n-1\}$ and $0\in S$.
 The set $\{q_i\}\cup S$ is reached from $\{q_0\}\cup S$ by $c^i$,
 and the latter set is reachable as shown in  (1).

 (3) Next, we show that each set $S$ with
 $S\subseteq\{0,1,\ldots,n-1\}$ and $0\in S$ is reachable.
 The set $S$ is reached from $\{q_{m-1}\}\cup S$ by $c$,
 and the latter set is reachable as shown in case (2).

 (4) Finally, we show that each non-empty set $S$ with
 $S\subseteq\{0,1,\ldots,n-1\}$ and $0\notin S$ is reachable.
 If $S=\{j_1,j_2,\ldots,j_k\}$ with $j_1\ge1$,
 then $S$ is reached from the set $\{0,j_2-j_1,\ldots,j_k-j_1\}$ by $a^{j_1}$,
 and the latter set is reachable as shown in case (3).

 This proves the  reachability of $m\cdot 2^{n-1}+2^n-1$ non-empty subsets.

 To prove distinguishability,
 notice that the string $b^n$ is accepted by the DFA $B$
 only from the state 0,
 and the string $a^{n-1-i}a b^n$ is accepted only from the state $i$
 ($1\le i\le n-1$).
 If $S$ and $T$ are two distinct subsets of $\{0,1,\ldots,n-1\}$,
 then $S$ and $T$
 differ in a state $i$.
 If $i=0$, then $b^n$ distinguishes $S$ and $T$,
 and if $i\ge 1$, then $a^{n-i}b^n$ distinguishes $S$ and $T$.

 Next, the sets $\{q_i\}\cup S$ and $\{q_i\}\cup T$,
 where $S$ and $T$ are distinct subsets of $\{0,1,\ldots,n-1\}$,
 go to $S$ and $T$, respectively, by $c^m$.
 Since  $S$ and $T$ are distinguishable,
 the sets $\{q_i\}\cup S$ and $\{q_i\}\cup T$
 are distinguishable as well.

 Finally, notice that the string $b^n a b^n$
 is accepted by the NFA $N$ from each state $q_i$,
 but rejected from each state $i$ in $\{0,1,\ldots,n-1\}$.
 Hence the sets $\{q_i\}\cup S$ and $T$,
 where $S$ and $T$ are subsets of $\{0,\ldots,n-1\}$,
 are distinguishable.
 Now let $0\le i <j\le m-1$.
 Then $\{q_i\}\cup S$ and $\{q_j\}\cup T$ 
 go  to 
 $\{q_{i+m-j}\}\cup S$ and $T$, respectively,  by $c^{m-j}$.
 Since  $\{q_{i+m-j}\}\cup S$ and $T$ are distinguishable,
 the sets $\{q_i\}\cup S$ and $\{q_j\}\cup T$
 are distinguishable as well.
 This proves the distinguishability of all the reachable subsets,
 and completes the proof.
\end{proof}

In the next theorem, we consider the nondeterministic case.
For regular languages, the upper bound
on the nondeterministic state complexity of concatenation 
is $m+n$, and it is tight in the binary case \cite{hk03}.
For prefix-closed languages,
we get the same bound for concatenation.
However, we define witness languages over a ternary alphabet.

\begin{theorem}
 Let $m,n\ge3$.
 Let $K$ and $L$ be  prefix-closed languages 
 over an alphabet $\Sigma$ with $\nsc(K)=m$ and $\nsc(L)=n$.
 Then  $\nsc(KL)\le m+n$,
 and the bound is tight if $|\Sigma|\ge3$.
\end{theorem}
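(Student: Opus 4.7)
The plan is to prove the upper bound by a standard all-states-final concatenation construction, and to supply a ternary witness for the lower bound via Lemma~\ref{le:fool1}.

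For the upper bound, let $A = (Q_A, \Sigma, \delta_A, s_A, Q_A)$ and $B = (Q_B, \Sigma, \delta_B, s_B, Q_B)$ be NFAs with $m$ and $n$ states respectively, all of them final, accepting $K$ and $L$. Starting from the natural $\eps$-NFA for $KL$ that attaches an $\eps$-transition from every state of $A$ (each being final) to $s_B$, I eliminate the $\eps$-transitions in the standard way. This produces an NFA $N$ on the disjoint union $Q_A \cup Q_B$ with $m + n$ states, initial state $s_A$, all states final, all transitions of $\delta_A$ and $\delta_B$ retained, and additionally a transition $(r, a, t)$ whenever $r \in Q_A$, $a \in \Sigma$, and $t \in \delta_B(s_B, a)$. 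A routine check shows $L(N) = KL$: an accepting computation either stays entirely inside $Q_A$ (so the input lies in $K \subseteq KL$), or at some point takes a jump transition $(r, a, t)$ into $Q_B$, which splits the input as $x_1 \cdot a x_2$ with $x_1 \in K$ and $a x_2 \in L$. Hence $\nsc(KL) \le m + n$.

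For the lower bound, I would exhibit prefix-closed ternary witness languages $K$ and $L$ with $\nsc(K) = m$ and $\nsc(L) = n$ and $\nsc(KL) \ge m + n$, following the fooling-set style already used in the paper. The intended template uses disjoint letter roles: one symbol advances a chain-like $m$-state NFA for $K$, a second symbol advances an analogous $n$-state NFA for $L$, and the third alphabet letter serves as a marker that forces (or detects) the transition from the $K$-portion to the $L$-portion. I would then assemble two families $\mathcal{A}$ and $\mathcal{B}$ of pairs, one indexed by the states on each side, together totalling $m + n - 1$ pairs, along with auxiliary strings $u$ and $v$ chosen so that $\mathcal{A} \cup \mathcal{B}$, $\mathcal{A} \cup \{(\eps, u)\}$, and $\mathcal{B} \cup \{(\eps, v)\}$ are all fooling sets for $KL$. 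Applying Lemma~\ref{le:fool1} then yields the lower bound $\nsc(KL) \ge m + n$.

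The main obstacle is designing the witness so that no state economy is possible in an NFA for $KL$. Since both $K$ and $L$ contain $\eps$, an optimistic construction might try to identify the initial states of the $K$-part and the $L$-part, saving one state and achieving only $m + n - 1$; the third alphabet letter is precisely what blocks such sharing, by providing the $L$-side with a transition whose analogue on the $K$-side would produce a string outside $KL$. The most delicate case of the fooling-set verification will be the cross pairs $x_i y_j$, where $x_i$ comes from $\mathcal{A}$ and $y_j$ from $\mathcal{B}$ (or vice versa): one must argue carefully, using the role of the bridging letter, that no split of the resulting string into a $K$-prefix and an $L$-suffix is possible, so that $x_i y_j \notin KL$ as required by condition (F2).
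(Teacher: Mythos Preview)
Your upper bound argument is correct and matches the paper's (which simply cites the general regular-language bound from \cite{hk03}); the explicit $\eps$-elimination you spell out is exactly the intended construction.

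For the lower bound, however, you have only outlined a plan and not delivered a proof: you never define the witness languages $K$ and $L$, never write down the pairs in $\mathcal{A}$ and $\mathcal{B}$, and never verify conditions (F1) and (F2). What you have is a description of the \emph{shape} a proof might take, together with a list of the difficulties you would still need to overcome. That is a genuine gap.

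Moreover, the route you propose through Lemma~\ref{le:fool1} is an unnecessary detour. That lemma is designed for operations (union, reversal) where a multiple-initial-state NFA can genuinely save a state, so that no fooling set of the target size exists. Concatenation is not such an operation: the paper simply exhibits a fooling set of size $m+n$ for $KL$ and applies Lemma~\ref{le:fool} directly. Concretely, it chooses incomplete DFAs for $K$ and $L$ over $\{a,b,c\}$ so that every string of $KL$ lies in $b^*a^*c^*b^*a^*c^*$ and contains at most $m+n-2$ occurrences of $a$; the fooling set is then
$(x_i,y_i)=(a^i,\,a^{m-1-i}cba^{n-1})$ for $0\le i\le m-1$ together with
$(x_{m+j},y_{m+j})=(a^{m-1}cba^j,\,a^{n-1-j})$ for $0\le j\le n-1$,
and (F2) follows in two lines by counting $a$'s (with one exceptional pair handled by the $b^*a^*c^*b^*a^*c^*$ pattern). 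This is simpler than what you sketch and, crucially, it is complete.
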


\begin{proof}
 The upper bound is the same as for regular languages \cite{hk03}.
 For tightness, consider the ternary prefix-closed languages $K$ and $L$
 accepted by incomplete DFAs $A$ and $B$, respectively,
 shown in  Figure~\ref{fig:productNSC}.
 Notice that if a  string $w$ is  in $KL$, then $w$ is in the language 
 $b^* a^* c^* b^* a^* c^*$, and the number of $a$'s in $w$
 is at most $(n+m-2)$.

 \begin{figure}[t]\label{-----fi6}
 \centerline{\includegraphics[scale=.40]{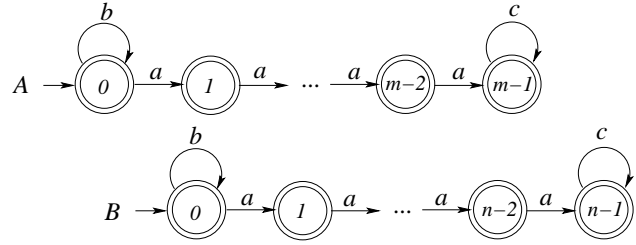}}
 \caption{The incomplete DFAs of prefix-closed languages $K$ and $L$ 
              with $\nsc(KL)=m+n$.}
 \label{fig:productNSC}
 \end{figure}

 For $i=0,1,\ldots,m+n-1$, define the pair $(x_i, y_i)$ as follows:
 \begin{align*}
      (x_i, y_i) &= ( a^i, a^{m-1-i} c b a^{n-1} ),  
     \quad\text{for $i=0,1,\ldots m-1$,}\\
    (x_{m+j}, y_{m+j} ) &= ( a^{m-1} c b a^j , a^{n-1-j} ), 
                      \quad \text{for $j=0,1,\ldots n-1$.}
 \end{align*}
 Let us show that the set of pairs
 $\mathcal{F} = \{ (x_i, y_i)\mid i=0,1,\ldots,m+n-1\}$
 is a fooling set for the language~$KL$.

 (F1) For each $i$, we have $x_i y_i = a^{m-1} c b a^{n-1}$.
 Thus $x_i y_i$ is in $KL$ since $a^{m-1} c $ is in $K$ and $b a^{n-1}$ is in $L$.

 (F2) Let $i<j$ and $(i,j)\neq(m-1,m)$.
 Then the number of $a$'s in the string $x_j y_i$ is greater than  $m+n-2$,
 and therefore the string $x_j y_i$ is not in $KL$.
 If  $(i,j)=(m-1,m)$, then $x_m y_{m-1} = a^{m-1} c b c b a^{n-1}$.
 Thus $x_m y_{m-1}$ is not in $b^* a^* c^* b^* a^* c^*$,
 and therefore it is not in $KL$.

 Hence the set $\mathcal{F}$ is a fooling set for the language $KL$,
 so $\nsc(KL)\ge m+n$.
\end{proof}

\section{Star and Reversal}
\label{se:sr}

We conclude our paper with the star and reversal operation 
on prefix-closed languages.
The star of a language $L$ is the language $L^*=\bigcup_{i\ge0} L^i$,
where $L^0=\{\eps\}$ and $L^{i+1}=L^i\cdot L$.

If a regular language  $L$ is accepted by a complete $n$-state DFA,
then the language $L^*$ is accepted by a DFA of at most
$3/4\cdot 2^n$ states, and the bound is tight in the binary case 
\cite{ma70,yzs94}. 

For prefix-closed languages, 
the upper bound on the quotient complexity for star is $2^{n-2}+1$,
and it has been shown to be tight in the ternary case \cite{bjz14}.
In the case of incomplete state complexity,
we get the bound $2^{n-1}$.
For the sake of completeness, we give a simple proof of the upper bound
using automata constructions.
Moreover, we are able to define a witness language over a binary alphabet.

\begin{theorem}
 Let $n\ge4$.
 Let $L$ be a prefix-closed regular language 
 over an alphabet $\Sigma$ with $\isc(L)=n$.
 Then $\isc(L^*)\le2^{n-1}$,
 and the bound is tight if $|\Sigma|\ge2$.
\end{theorem}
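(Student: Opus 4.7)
The approach is to derive the upper bound from an $\eps$-NFA construction whose subset automaton carries a ``contains $s$'' invariant, and to match it by a binary witness using an inductive reachability/distinguishability analysis in the style of the concatenation proof.

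\emph{Upper bound.} Let $A=(Q,\Sigma,\delta,s,Q)$ be a minimal incomplete DFA for $L$ with $|Q|=n$, all states final. Form an $\eps$-NFA $N_\eps$ from $A$ by adding an $\eps$-transition $q\to s$ for every $q\in Q$. Since $s$ is final we have $\eps\in L$ and $L^*=L^+$; decomposing any accepting run of $N_\eps$ at its $\eps$-transitions into successive $A$-runs, each reading a word of $L$ because every state of $A$ is final, gives $L(N_\eps)=L^*$. In the subset automaton, the $\eps$-closure of any non-empty set contains $s$, so every non-empty reachable subset of $Q$ contains $s$. The number of subsets of $Q$ containing $s$ is $2^{n-1}$, so after discarding the empty (dead) subset we obtain $\isc(L^*)\le 2^{n-1}$.

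\emph{Lower bound.} For tightness over $\{a,b\}$ I would exhibit a prefix-closed language $L$ with $\isc(L)=n$ whose star-automaton subset construction realises every one of the $2^{n-1}$ subsets of $Q$ containing $s$ as a reachable and pairwise distinguishable state. A natural candidate has states $s=q_0,q_1,\ldots,q_{n-1}$, with $a$ forming the linear chain $q_0\to q_1\to\cdots\to q_{n-1}$ (and $\delta(q_{n-1},a)$ undefined) and $b$ performing a small rotation, e.g.\ $\delta(q_i,b)=q_{i+1}$ for $i<n-1$ and $\delta(q_{n-1},b)=q_1$, so that the restart shortcut to $s$, combined with the $b$-jumps, can adjoin a new element to the current subset.

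\emph{Main obstacle.} The delicate part is the lower-bound reachability: every subset $\{s\}\cup T$ with $T\subseteq\{q_1,\ldots,q_{n-1}\}$ must be shown reachable from $\{s\}$. I plan a size induction mirroring the concatenation proof, producing $\{s\}\cup T$ from a size-$(|T|-1)$ subset via a short word of the form $a^i b$ (or $ba^i$) that shifts existing elements along the $a$-chain while the restart to $s$ keeps $s$ present. Distinguishability then follows by choosing, for each $q_i$ with $i\ge 1$, a word $y_i$ accepted in $A$ precisely from $q_i$ (e.g.\ $y_i=a^{n-1-i}$ possibly followed by a $b$-tail that fails from every other state); two distinct subsets, differing at some $q_i$, are then separated by $y_i$. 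Tuning the $b$-transition so as to simultaneously guarantee reachability of all $2^{n-1}$ target subsets, the existence of the separating words, and the minimality of the original DFA is the main technical balance.
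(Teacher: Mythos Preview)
Your upper bound is correct and matches the paper's; the paper avoids $\eps$-moves by adding $q\xrightarrow{a}s$ whenever $\delta(q,a)$ is defined, but the resulting invariant (every non-empty reachable subset contains $s$) and the count $2^{n-1}$ are identical.

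The lower bound, however, is only a plan, and the concrete candidate you propose cannot reach the bound. In your DFA both letters send $s=q_0$ to $q_1$. Since every non-empty reachable subset of the star automaton contains $s$, applying either $a$ or $b$ to such a subset always injects $q_1$; hence every subset reached by a non-empty word contains $\{q_0,q_1\}$, and at most $2^{n-2}+1<2^{n-1}$ subsets are reachable. This is not a matter of ``tuning'' $b$: as long as $\delta(s,a)$ and $\delta(s,b)$ are both defined and different from $s$, the reachable family is confined to subsets containing $\{s,\delta(s,a)\}$ or $\{s,\delta(s,b)\}$, which by inclusion--exclusion is strictly fewer than $2^{n-1}$ for $n\ge4$. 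To hit the bound one must leave a transition out of $s$ \emph{undefined}, so that that letter can shuffle the non-$s$ part of the subset without forcing in a fixed successor of $s$.

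The paper's witness does exactly this, and its structure is quite far from a chain-plus-rotation: the initial state (and state $2$) have no $a$-transition at all, $b$ acts as the $3$-cycle $(1,2,3)$ on the first three states, and on the remaining states $a$ and $b$ act as transpositions $i\leftrightarrow i\pm1$ with opposite parities. Reachability is then argued by induction on the \emph{second smallest} element of the target subset (not just its size), and distinguishability is obtained by exhibiting, for every state $q$, a rather long word of the shape $(ab)^{n-2}$ or a shifted variant accepted only from $q$. So the gap in your proposal is genuine: the linear-chain skeleton has to be abandoned, and the combinatorics of the working witness are substantially more delicate than what you outline.
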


\begin{proof}
  Let $A=(Q,\Sigma,\cdot,s,Q)$
 be an incomplete DFA for $L$.
 Construct an NFA $A^*$ for $L^*$ from the DFA $A$
 by adding the transition on a symbol $a$
 from a state $q$ to the initial state $s$
 whenever the transition $q\cdot a$ is defined.
 In the subset automaton of the NFA $A^*$,
 each reachable set is either empty,
 or it contains the initial state $s$.
 It follows that $\isc(L^*)\le2^{n-1}$.

 For tightness, consider
 the binary incomplete DFA 
 with the state set $\{1,2,\ldots,n\}$,
 the initial state $1$ and with all states final.
 The transitions are as follows.
 By $a$, the transitions in states 1 and 2 are undefined,
 each odd state $i$ with $3\le i \le n-1$ goes to $i+1$, and
 each even state $i$ with $3\le i \le n-1$ goes to $i-1$.
 By $b$, there is a cycle $(1,2,3)$,
 each odd state $i$ with $4\le i \le n-1$ goes to $i-1$, and
 each even state $i$ with $4\le i \le n-1$ goes to $i+1$.
 If $n$ is odd, then $n$ goes to itself by $a$,
 otherwise it goes to itself by~$b$.
 The DFA for $n=6$ is shown in Figure~\ref{fig:star}.

 \begin{figure}[t]\label{-----fi7}
  \centerline{\includegraphics[scale=.40]{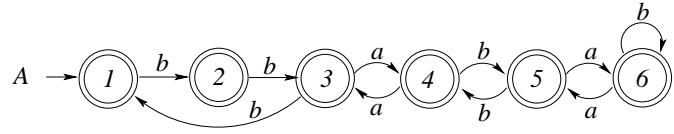}}
  \caption{The incomplete DFA $A$ of a prefix-closed language $L$
           with $\isc(L^*)=2^{n-1}$; $n=6$.}
  \label{fig:star}
  \end{figure}

 Notice that each state $i$ with $3\le i \le n$
 has exactly one in-transition on $a$ and on $b$.
 Denote by $a^{-1}(i)$ the state that goes to $i$ on $a$,
 and by $b^{-1}(i)$ the state that goes to $i$ on $b$.

 Construct an NFA $A^*$ as described above.
 Let us show that in the subset automaton of the NFA $A^*$,
 all subsets of $\{1,2,\ldots,n\}$ containing state $1$
 are reachable and pairwise distinguishable.

 We prove reachability by induction on the size of subsets.
 The basis is $|S|=1$,
 and the set $\{1\}$ is reachable since it is the initial state of 
 the subset automaton.
Assume that every set $S$ containing $1$
with $|S| = k,$ where  $1\leqslant{k}\leqslant{n-1},$ is reachable.
Let $S=\{1,i_1,i_2,i_3,\ldots,i_k\},$ where $2\leqslant{i_1}<{i_2}<{\cdots}<{i_k}\leqslant{n},$
be a set of size $k+1$.
Consider three cases:
\begin{enumerate}
     \item [(\textit{i})]$ i_{1}= 2.$ 
Take $S^{'} = \{1,b^{-1}(i_2),b^{-1}(i_3),\ldots,b^{-1}(i_k)\} $. 
Then $|S^{'}| = k $, and therefore $S^{'}$ is reachable by the induction hypothesis.
Since we have
 $
          S'\xrightarrow{b}\{1,2, i_2,\ldots,i_k\} = S,
$
 the set $S$ is reachable.

 \item[(\textit{ii})]$ i_{1}= 3.$
Take $S'=\{1,2,b^{-1}(i_2),b^{-1}(i_3),\ldots,b^{-1}(i_k)\}$. 
Then $|S'|=k+1$ and $S'$ contains states $1$ and $2$. Therefore, the set $S'$ is reachable as shown in case $(i)$.
Since we have
$
     S^{'}\xrightarrow{b}\{1,2,3,i_2,i_3,\ldots,i_k\}
              \xrightarrow{aa}\{1,3,i_2,i_3,\ldots,i_k\}=S,
$
the set $S$ is reachable.

 \item[(\textit{iii})] Let $i_1=j\ge3$,
 and assume that each set $\{1,j,i_2,\ldots,i_k\}$ is reachable.
 Let us show that then also each set $\{1,j+1,i_2,\ldots,i_k\}$
 is reachable.
 If $j$ is odd, then the set
 $\{1,j+1,i_2,\ldots,i_k\}$ is reached from the set
 $\{1,j,a^{-1}(i_2),a^{-1}(i_3),\ldots,a^{-1}(i_k)\} $ by $a$.
 If $j$ is even, then  the set
 $\{1,j+1,i_2,\ldots,i_k\}$ is reached from the set
 $\{1,j,b^{-1}(i_2),b^{-1}(i_3),\ldots,b^{-1}(i_k)\} $ by $b a a$.
\end{enumerate}

This proves reachability.
To prove distinguishability,
 notice that the string $(ab)^{n-2}$
 is accepted by the NFA $A^*$ from state $3$ since
 state $3$ goes to the initial state $1$ by $(ab)^{n-2}$
 through the computation 
  $$3\xrightarrow{ab}5\xrightarrow{ab}7\xrightarrow{ab}
   \cdots\xrightarrow{ab} n \xrightarrow{a} n\xrightarrow{b} n-1
\xrightarrow{ab} n-3\xrightarrow{ab}
   \cdots\xrightarrow{ab} 4\xrightarrow{ab}1 $$
if $n$ is odd, and through a similar computation if $n$ is even.
 On the other hand, the  string $(ab)^{n-2}$ 
cannot be read from any  other state $2i$ with $2\le i\le n/2$
 since we have
$$
    2i \xrightarrow{ab} \{2i-2,1,2\}
        \xrightarrow{ab} \{2i-4,1,2\}
 \xrightarrow{ab} \cdots  \xrightarrow{ab}\{4,1,2\} \xrightarrow{a}\{3,1\}
 \xrightarrow{b}\{1,2\} \xrightarrow{ab} \emptyset,
$$
 thus $2i$ goes to the empty set by $(ab)^i$, so also by $(ab)^{n-2}$.
 If $n$ is odd, then we have
$$
      2i+1  \xrightarrow{ab} \{2i+3,1,2\}
 \xrightarrow{ab} \{2i+5,1,2\}
 \xrightarrow{ab} \cdots  \xrightarrow{ab} \{n,1,2\}
\xrightarrow{a} \{n,1\} \xrightarrow{b}\{n-1,1,2\} 
\xrightarrow{ab}$$ $$  \{n-3,1,2\}\xrightarrow{ab} \cdots \xrightarrow{ab} 
\{2i,1,2\} \xrightarrow{(ab)^i} \emptyset,
$$
thus $2i+1$ goes to the empty set by $(ab)^{n-i}$, $i\ge2$,
and so also by $(ab)^{n-2}$.
For $n$ even, the argument is similar.
The string $(ab)^{n-2}$ is not accepted from states 1 and 2.
Hence the NFA $A^*$ accepts the string $(ab)^{n-2}$
only from the state 3.
 Since there is exactly one in-transition on $b$ in state $3$,
 and it goes from state $2$,
 the string $b (ab)^{n-2}$
 is accepted by $A^*$ only from state $2$.
 Similarly, the string $ b b (ab)^{n-2}$
 is accepted
 by $A^*$ only from state $1$.
 Next, for similar reasons, 
 the string $a (ab)^{n-2}$ is accepted only from $4$,
 the string $ b a (ab)^{n-2}$ is accepted only from $5$,
 and in the general case, the string
 $(ab)^i a (ab)^{n-2}$ is accepted only from $4+2i$  ($i\ge0$),
 and the string
 $(ba)^i (ab)^{n-2}$ is accepted only from $3+2i$  ($i\ge1$).
 Hence for each state $q$ of the NFA $A^*$,
 there exists a string $w_q$ that is accepted by $A^*$
 only from the state $q$.
 It follows that all the subsets of the
 subset automaton of the NFA $A^*$
 are pairwise distinguishable
 since two distinct subsets differ in a state $q$,
 and the string $w_q$ distinguishes the two subsets.
 This completes the proof.
\end{proof}

We did some computations in the binary case.
Having the files of $n$-state minimal binary pairwise non-isomorphic complete DFAs with a dead state and all the remaining states final,
we computed the state complexity of the star of languages accepted
by DFAs on the lists; 
here the state complexity of a regular language $L$, $\sc(L)$,
is defined as the smallest number of states
in any \emph{complete} DFA for the language~$L$.
We computed the frequencies of the resulting complexities,
and the average complexity of star.
Our results are summarized in Table~\ref{ta}.
Notice that for $n=3,4,5$,
there is just one language with $\sc(L)=n$ and $\sc(L^*)=2$.
Let us show that this holds for every $n$ with $n\ge3$.

\begin{table}[b]
\begin{center}
\setlength{\extrarowheight}{4pt}
\begin{tabular}{|c|c|c|c|c|c|c|c|c|c|c|}
\hline
$n \backslash \sc(L^*) $      & \ \ \ 1\  \ \ &\ \ \  2\  \ \   &\  \ \   3 \ \ \  
 &\ \ \    4 \ \ \   &  \ \ \  5  \ \ \  &  \ \ \  6  \ \ \  & \ \ \   7  \ \ \ 
&\ \ \ 8 \ \ \ &\ \ \ 9 \ \ \ & average\\
\hline
2       &-  &2 & - & -  & - & -  & -  &-&-&  2\\
\hline
3         & 8  & 1  & 6 & -  & - & -  & -  &  -&-&  1.866\\
\hline 
4        &161 &1 &48& 30& 6& -  & -  &  -&-&  1.857\\
\hline
5        &4177 &1&771& 275& 350&84&84& -&26&   1.849\\
\hline
\end{tabular}
\vskip10pt
\caption{The frequencies of the complexities
     and the average complexity of star
 on prefix-closed languages in the binary case; $n=2,3,4,5.$}
\label{ta}
\end{center}
\end{table}

\begin{proposition}
 Let $n\ge3$.
 There exists exactly one (up to renaming of alphabet symbols)
 binary prefix-closed regular language $L$
 with $\sc(L)=~n$ and $\sc(L^*)=2$.
\end{proposition}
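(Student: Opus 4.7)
The plan is to classify all possible minimal complete DFAs of size $2$ that could serve as a DFA for $L^*$, and then back-solve for $L$. First I would observe that $L^*$ inherits prefix-closedness from $L$: if $w=w_1\cdots w_k\in L^*$ with each $w_i\in L$ and $u$ is a prefix of $w$, then $u=w_1\cdots w_{j-1}v$ with $v$ a prefix of some $w_j$, and since $L$ is prefix-closed we have $v\in L$, hence $u\in L^*$. Consequently, in the minimal complete DFA for any prefix-closed language the unique non-final state (if present) is a dead state, so a $2$-state minimal complete DFA for $L^*$ must consist of an initial, final state $q_0$ together with a single dead state $q_1$.

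Next I would enumerate the transitions out of $q_0$ over the binary alphabet $\{a,b\}$. There are four cases: both letters loop at $q_0$ (then $q_1$ is unreachable, contradicting minimality); both letters go to $q_1$ (then $L^*=\{\varepsilon\}$, which forces $L\subseteq\{\varepsilon\}$ and hence $\sc(L)\le 2$, contradicting $n\ge 3$); or exactly one letter loops and the other goes to $q_1$. The two remaining cases are interchanged by swapping $a$ and $b$, so up to renaming of alphabet symbols we may assume $L^*=a^*$.

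Then I would deduce the shape of $L$. Since $L\subseteq L^*=a^*$ and $L$ is prefix-closed, $L=\{a^i\mid i\in S\}$ for a downward-closed $S\subseteq\mathbb{N}$. If $S$ is infinite, then $L=a^*$ and its minimal complete DFA over $\{a,b\}$ has only $2$ states, contradicting $\sc(L)=n\ge 3$. Hence $S=\{0,1,\ldots,m-1\}$ for some finite $m\ge 1$, and the minimal complete DFA for $L$ has exactly the $m$ final states plus one dead state, so $\sc(L)=m+1$, giving $m=n-1$. Finally, $L^*=a^*$ forces $a\in L$, that is $m\ge 2$, which is consistent with $n\ge 3$. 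Thus $L$ is forced to be $\{\varepsilon,a,a^2,\ldots,a^{n-2}\}$ up to renaming, and this language clearly does satisfy $\sc(L)=n$ and $\sc(L^*)=\sc(a^*)=2$, proving both existence and uniqueness.

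There is no single hard step; the only thing to be careful about is that the statement uses $\sc$ rather than $\isc$, so the extra dead state must be counted when one passes from the shape of $L$ to its state complexity, and the degenerate cases $L=\emptyset$, $L=\{\varepsilon\}$, and $L=a^*$ must be explicitly ruled out by the hypothesis $n\ge 3$.
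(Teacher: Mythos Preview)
Your proof is correct and follows essentially the same approach as the paper: first determine $L^*$ by classifying the possible $2$-state minimal DFAs for a prefix-closed language, then back-solve for $L$. The only cosmetic differences are that the paper fixes $L^*=b^*$ rather than $a^*$, and in the second half the paper reasons at the automaton level (arguing that every $a$-transition must go to the dead state, reachability forces a $b^{n-2}$-path, and distinguishability forces the last final state to the dead state on $b$), whereas you reason directly at the language level (a prefix-closed subset of $a^*$ is $\{a^i\mid i<m\}$ or $a^*$, and counting states pins down $m$). Both routes are equally short; yours is arguably a bit more transparent and also makes the existence half explicit.
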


\begin{proof}
 Let $A=(\{0,1\}, \{a,b\}, \delta, 0, F)$
 be a minimal two-state DFA for the language $L^*$.
 Since $L$ is prefix-closed, 
 the language $L^*$ is prefix-closed as well.
 It follows that state 0 is final, and state 1 is dead,
 thus $F=\{0\}$ and $\delta(1,a)=\delta(1,b)=1$.

 Without loss of generality,
 state 1 is reached from the initial state 0 by $a$,
 thus $\delta(0,a) =1$.

 Since $n\ge3$, the language $L$ contains a non-empty string.
 This means that the language $L^*$  contains a non-empty string as well.
 Therefore, we must have $\delta(0,b)=0$,
 and so $L^*=b^*$.

 \begin{figure}[t]\label{-----fi8}
 \centerline{\includegraphics[scale=.40]{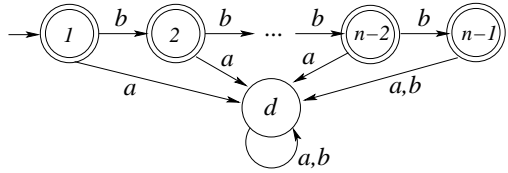}}
 \caption{The only binary $n$-state complete DFA of 
              a prefix-closed language $L$ with $\sc(L^*)=2$.}
 \label{fig:allFinal2}
 \end{figure}

 Now let $B$ be the minimal $n$-state  DFA for $L$.
 Then all the states of $B$ are final, except for  the dead state.
 Since $L^*=b^*$,
 no $a$ may occur in any string of $L$.
 Hence each non-dead state of $B$
 must go to the dead state on $a$.
 Since all states must be reachable,
 we must have a path labeled by $b^{n-2}$ and
 going through all the final states.
 The last final state must go to the dead state on $b$
 because otherwise all  final states
 would be equivalent.
 The resulting $n$-state DFA $B$
 is shown in Figure~\ref{fig:allFinal2}. 
\end{proof}

The reverse $w^R$ of a string $w$ is defined by $\eps^R=\eps$,
and $(w a)^R= a w^R$ for $a$ in $\Sigma$ and $w$ in $\Sigma^*$.
The reverse of a language $L$ is the language 
$L^R=\{w^R\mid w\in L\}$.
If a regular language $L$ is accepted by a complete $n$-state DFA,
then the language $L^R$ is accepted by  a complete DFA of at most $2^n$
states \cite{mi66,yzs94},
and the bound is tight in the binary case \cite{js12,le81}.

For prefix-closed languages, the quotient complexity of reversal
is $2^{n-1}$  \cite{bjz14}, and it
follows from the results on ideal languages \cite{bjl13}
since reversal commutes with complementation,
and the complement of a prefix-closed language is a right ideal;
here a language $L$ is a right ideal if $L=L\cdot \Sigma^*$.

We restate the result for reversal in terms of incomplete state complexity,
and prove tightness using a slightly different witness language.

\begin{theorem}
 Let $n\ge2$.
 Let $L$ be a prefix-closed regular language 
 over an alphabet $\Sigma$ with $\isc(L)=n$.
 Then $\isc(L^R)\le2^{n}-1$,
 and the bound is tight if $|\Sigma|\ge2$.
\end{theorem}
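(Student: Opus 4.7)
For the upper bound, take the minimal incomplete DFA $A=(Q,\Sigma,\delta,s,Q)$ for $L$ with $|Q|=n$ and consider its reverse NFA $A^R=(Q,\Sigma,\delta^R,Q,\{s\})$, in which every state is initial since every state of $A$ is final. Determinizing $A^R$ by the subset construction, the reachable states lie in $2^Q$; discarding the empty subset, which is always a dead state and therefore not counted in an incomplete DFA, yields $\isc(L^R)\le 2^n-1$.

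For tightness I would take the binary incomplete DFA $A$ with states $Q=\{0,1,\ldots,n-1\}$, initial state $0$, all states final, and transitions $a\colon i\mapsto(i+1)\bmod n$ for every $i$ and $b\colon i\mapsto i$ for $0\le i\le n-2$, with $\delta(n-1,b)$ undefined. All states being final, $L=L(A)$ is prefix-closed by the earlier proposition, and $A$ is minimal because the string $a^{n-1-i}b$ fails from state $i$ (it drives $i$ to $n-1$, where $b$ is undefined) but is accepted from every other state. In the reverse NFA $A^R$ the symbol $a$ becomes the inverse cyclic shift $q\mapsto(q-1)\bmod n$, while $b$ becomes the partial identity defined only on $\{0,1,\ldots,n-2\}$; so on subsets they act by $a\colon S\mapsto\{(q-1)\bmod n\mid q\in S\}$ and $b\colon S\mapsto S\setminus\{n-1\}$.

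The core of the argument is to show that every non-empty subset of $Q$ is reachable in the subset automaton of $A^R$ starting from $Q$; I would proceed by induction on the number of missing elements $k=n-|S|$. The base $k=0$ gives $S=Q$. In the inductive step, if $n-1\notin S$ then $S\cup\{n-1\}$ has $k-1$ missing elements, is reachable by the inductive hypothesis, and $b$ sends it to $S$; if instead $n-1\in S$, I pick any $j\notin S$, observe that the rotated subset $S'=a^{(j+1)\bmod n}(S)$ omits $n-1$, reach $S'$ by the previous sub-case (still within the same $k$), and apply $a^{n-(j+1)\bmod n}$ to rotate back to $S$, using that $a^n$ is the identity on $Q$. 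Distinguishability then follows from the standard observation in the preliminaries: since $a$ is a bijection on $Q$ in $A^R$, the string $a^q$ sends state $q$ to the unique final state $0$ but sends every other state $q'$ to $(q'-q)\bmod n\ne 0$, so $a^q$ is accepted in $A^R$ only from $q$, and therefore any two distinct reachable subsets are distinguishable. The main obstacle is the rotation step in case~(ii) of the reachability induction; the cyclic structure of $a$ is essential, since $b$ by itself can only delete the single element $n-1$.
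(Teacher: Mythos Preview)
Your proof is correct and uses essentially the same witness automaton and argument as the paper (with states labelled $0,\ldots,n-1$ instead of $1,\ldots,n$). The paper streamlines two points: for reachability it observes in one line that $S\setminus\{i\}$ is reached from $S$ by the string $a^{i}\,b\,a^{n-i}$ (your two-case induction unfolds exactly this rotate--delete--rotate pattern), and for distinguishability it simply invokes Brzozowski's classical fact that the subset automaton of any reversed DFA never has equivalent states, so no explicit witness strings are needed.
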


\begin{proof}
 Let $A$ be an incomplete DFA for $L$.
 Construct an NFA $A^R$ for the language $L^R$
 from the DFA $A$
 by swapping the role of the initial and final states,
 and by reversing all the transitions.
 The subset automaton of the NFA $A^R$ 
 has at most $2^n-1$ non-empty reachable states,
 and the upper bound follows.

 For tightness,
 consider the incomplete DFA $A$ with all states final,
 shown in Figure~\ref{fig:reversal}.
 Construct an NFA $A^R$ as described above.
 In the subset automaton of the NFA $A^R$,
 the initial state is $\{1,2,\ldots,n\}$.
 If $S$ is a subset and if $i\in S$,
 then the subset $S\setminus\{i\}$
 is reached from $S$ by $a^ i b a^{n-i}$.
 This proves the reachability of all non-empty
 subset by odd induction.
 Since the states of the subset automaton of
 any reversed DFA are pairwise distinguishable \cite{ckp,js12,mi66},
 the theorem follows.

 \begin{figure}[t]\label{-----fi9}
 \centerline{\includegraphics[scale=.35]{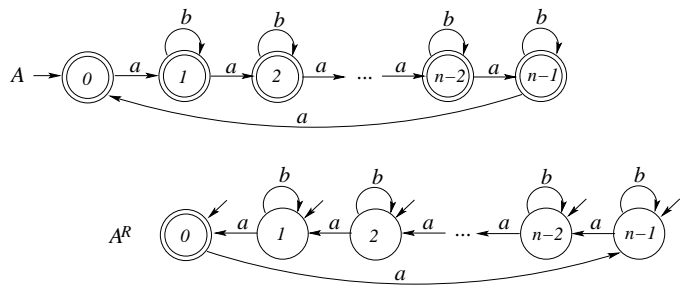}}
 \caption{The incomplete DFA $A$ of a  language $L$
               with $\isc(L^R)=2^n-1$, and the NFA $A^R$. }
 \label{fig:reversal}
 \end{figure}
 \end{proof}

Now, let us turn to the nondeterministic case.
For regular languages, the tight bound for both star and reversal is
$n+1$. It is met by a unary language for star \cite{hk03},
and by a binary language for reversal \cite{ji05}.

For prefix-closed languages, we get the same bound for reversal.
However, for star, the upper bound is $n$
since every prefix-closed language contains the empty string,
and there is no need to add a new initial state in the 
 construction of an NFA for star.
In the following theorem, 
we show that both these bounds are tight in the binary case.

\begin{theorem}
 Let $n\ge2$.
 Let $L$ be a prefix-closed language 
 over an alphabet $\Sigma$ with $\nsc(L)=n$.
 Then \\
 \hglue10pt (1) $\nsc(L^*)\le n$,\\
 \hglue10pt (2) $\nsc(L^R)\le n+1$,\\
 and both bounds are tight if $|\Sigma|\ge2$.
\end{theorem}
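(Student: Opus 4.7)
The plan is to handle the two bounds separately, establishing each upper bound by an explicit NFA construction and each lower bound with a binary witness together with a fooling set.

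For the upper bound on $L^*$, I would take an NFA $A=(Q,\Sigma,\delta,s,Q)$ with all $n$ states final for $L$ and build $A^*$ on the same state set $Q$ by setting $\delta^*(q,a)=\delta(q,a)\cup\{s\}$ whenever $\delta(q,a)\neq\emptyset$, and $\delta^*(q,a)=\emptyset$ otherwise. Since every state is final, each accepting computation in $A$ can be restarted at $s$ exactly when a symbol is consumed, and since $s$ is itself final we have $\eps\in L(A^*)$ without having to add a new initial state; a short verification gives $L(A^*)=L^*$, so $\nsc(L^*)\le n$. For $L^R$, I would reverse all transitions of $A$ and swap the roles of initial and final states. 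Because all $n$ states of $A$ are final, the reversed automaton has $n$ initial states; to fit the single-initial-state model in the definition of $\nsc$, I prepend one fresh initial state $s'$ and declare $\delta^R(s',a)=\{p:\delta(p,a)\cap Q\neq\emptyset\text{ in the reverse}\}$, which gives an NFA with $n+1$ states for $L^R$.

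For tightness I would exhibit binary prefix-closed witnesses. For star, the natural candidate is an $n$-state NFA with states $\{0,1,\dots,n-1\}$, initial state $0$, all states final, where $a$ implements the cycle $i\mapsto (i+1)\bmod n$ and $b$ is used to detect a specific state (for instance a single $b$-transition $0\xrightarrow{b}0$ or a short $b$-labelled chord), arranged so that the language is prefix-closed. I would then produce a fooling set $\{(x_i,y_i):i=0,\dots,n-1\}$ of size $n$ for $L^*$, with $x_i=a^i$ and $y_i$ chosen as a word of the form $a^{n-i}b\cdots$ that forces any NFA computation on $x_iy_i$ to visit a distinct state after reading $x_i$; applying Lemma~\ref{le:fool} yields $\nsc(L^*)\ge n$. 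For reversal, I would pick an $n$-state witness whose reversal has a fooling set of size $n$ plus a separating pair $(\eps,u),(\eps,v)$ as in Lemma~\ref{le:fool1}; a concrete candidate is the NFA on states $\{1,\dots,n\}$, initial $1$, all final, with $b$-self-loop at $1$, $a$-chain $1\xrightarrow{a}2\xrightarrow{a}\cdots\xrightarrow{a}n$ and a $b$-arc from $n$ back to $1$ (or similar), arranged to be prefix-closed; then I would split the fooling set into two halves $\mathcal{A},\mathcal{B}$ and find strings $u,v$ showing the initial state of any single-initial-state NFA must be distinct from all $|\mathcal{A}|+|\mathcal{B}|=n$ previously found states, giving $\nsc(L^R)\ge n+1$.

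The main obstacle will be designing the binary witnesses subject to three simultaneous constraints: (i) the underlying NFA has exactly $n$ nondeterministic states (so one needs a small lower bound for $L$ itself, ensured by a small fooling set), (ii) the language is prefix-closed, which rules out the standard ``accept only at the last state'' witnesses used for the general regular case, and (iii) after applying star or reversal the resulting language admits a fooling set meeting the upper bound. Proving the extra separability conditions of Lemma~\ref{le:fool1} for the reversal witness — that is, finding the auxiliary words $u$ and $v$ such that $\mathcal{A}\cup\{(\eps,u)\}$ and $\mathcal{B}\cup\{(\eps,v)\}$ are both fooling sets for $L^R$ — is the most delicate step, and will drive the precise choice of $b$-transitions in the witness automaton.
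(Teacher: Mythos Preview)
Your overall plan matches the paper's proof: the upper-bound constructions are essentially those of the paper (the paper phrases the star construction for a general NFA $(Q,\Sigma,\delta,s,F)$ by adding $q\xrightarrow{a}s$ whenever $\delta(q,a)\cap F\neq\emptyset$, which specialises to your rule when $F=Q$), and the lower-bound strategy via a fooling set for $L^*$ and via Lemma~\ref{le:fool1} for $L^R$ is exactly what the paper does.

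The genuine gap is in your star witness. Taking $a$ to be the full cycle $i\mapsto(i+1)\bmod n$ with all states final forces $a^*\subseteq L$, and then your suggested $b$-self-loop at state $0$ puts $b\in L$; consequently $a^kb=a^k\cdot b\in L\cdot L\subseteq L^*$ for every $k$, and the intended pairs $(a^i,a^{n-i}b)$ fail condition~(F2). More generally, with the $a$-cycle present, in the NFA for $L^*$ the prefix $a^i$ reaches the nested set $\{0,1,\dots,i\}$, and any single $b$-transition placed at a state $k$ will be readable from $\{0,\dots,i\}$ as soon as $i\ge k$; you cannot separate all $n$ prefixes this way. The paper avoids this by using an $a$-\emph{path} $0\xrightarrow{a}1\xrightarrow{a}\cdots\xrightarrow{a}n{-}1$ (no $a$-transition out of $n{-}1$) together with $b\colon n{-}1\to 0$. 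Then every word of $L$ containing a $b$ must begin with $a^{n-1}$, so $a^{n-1-(j-i)}b\notin L^*$ whenever $j>i$, and the fooling set $\{(a^i,a^{n-1-i}b)\mid 0\le i\le n-1\}$ works. The same automaton serves as the reversal witness: with $\mathcal{A}=\{(ba^i,a^{n-1-i})\mid 0\le i\le n-2\}$, $\mathcal{B}=\{(ba^{n-1},ba^{n-1})\}$, $u=ba^{n-1}$, $v=a$, Lemma~\ref{le:fool1} gives $\nsc(L^R)\ge n+1$ directly. So your reversal plan is right in spirit, but you should drop the $a$-cycle and the $b$-self-loop at the initial state, and note that a single binary witness suffices for both parts.
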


\begin{proof}
 (1) Let $N=(Q,\Sigma,\delta,s, F)$ be an $n$-state NFA for $L$.
 Since $L$ is prefix-closed, the empty string is in~$L$.
 Therefore, we can get an $n$-state NFA for the language $L^*$
 from the NFA $N$  as follows:
 for each state $q$ and each symbol $a$ such that
 $\delta(q,a)\cap F\neq \emptyset$,
 we add a transition on $a$ from $q$ to the initial state $s$.
 Thus the upper bound is $n$.

 For tightness, consider the prefix-closed language $L$
 accepted by the NFA $A$ shown in Figure~\ref{fig:star_nsc}.
 Consider the set of pair of strings
 $\mathcal{F}=\{ (a^i, a^{n-1-i}b) \mid i=0,1,\ldots,n-1\}$
 of size $n$.
 Let us show that $\mathcal{F}$
 is a fooling set for the language~$L^*$.

 \begin{figure}[b]\label{-----fi10}
 \centerline{\includegraphics[scale=.40]{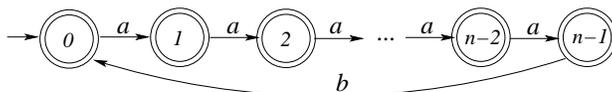}}
 \caption{The NFA of a prefix-closed language $L$ with $\nsc(L^*)=n$
              and $\nsc(L^R)=n+1$.}
 \label{fig:star_nsc}
 \end{figure}

 (F1) We have $a^i  a^{n-1-i}b=a^{n-1}b$.
 Since the string $a^{n-1}b$ is in $L$,
 it also is in $L^*$.

 (F2) Let $i<j$. Then $a^i a^{n-1-j}b = a^{n-1-(j-i)} b$.
 Since no string $a^\ell b$ with $\ell<n-1$ is in $L$,
 the string $ a^{n-1-(j-i)} b$ is not in $L^*$.

 Hence the set $\mathcal{F}$
 is a fooling set for the language $L^*$,
 and the lower bound follows.

\smallskip
 (2) The upper bound is the same as 
 for regular languages \cite{hk03}.
 It is shown in \cite[Theorem~2]{ji05}
 that this bound is met by the binary prefix-closed language $L$
 accepted by the NFA   shown in Figure~\ref{fig:star_nsc}.
 The~proof~in \cite{ji05} is by a counting argument.
 Notice that Lemma~\ref{le:fool1} is satisfied for the language $L^R$ with
 $
 \mathcal{A}=\{(b a^i, a^{n-1-i})\mid i=0,1,\ldots,n-2\},
 \mathcal{B}=\{(b a^{n-1}, b a^{n-1})\},
 u = b a^{n-1}$, and
$ v = a$. This gives $\nsc(L^R)\ge n+1$ immediately.
\end{proof}

\section{Conclusions}

In this paper we considered operations
on languages recognized by incomplete deterministic
or nondeterministic finite automata with all states final.
Our results are summarized in Tables~\ref{ta} and~\ref{ta2}.
The results on quotient (state) complexity 
on prefix-closed languages are from \cite{bjz14},
and the results for regular languages are from \cite{hk03,ji05,ma70,yzs94}.
Notice that in the nondeterministic case,
our results are the same as in the general case of regular languages,
except for the star operation.
However, to prove tightness,
we usually  used larger alphabets than in the general case.
Whether or not these bounds are tight also for smaller alphabets remains open.

\begin{table}[t]
\begin{center}
\begin{tabular}{|ll|cc|cc|cc|}
\hline
&& complement &$|\Sigma|$&intersection&$|\Sigma|$& union & $|\Sigma|$    \\ 
\hline
$\isc$     &on prefix-closed   
& $n+1$ & 1 & $m n$ & 2& $m n+m+n$ &2  \\
\hline
$\sc$ & on prefix-closed \cite{bjz14}   
& $n$ & 1 & $m n-m-n+2$ & 2& $m n$ &2  \\
\hline
$\sc$ & on regular \cite{ma70,yzs94}    
& $n$ & 1 & $m n$ & 2& $m n$ &2  \\
\hline
\hline
$\nsc$ &on prefix-closed 
& $2^n$ & 3  & $m n$ &2& $m+n+1$ & 4\\
\hline
$\nsc$ &on regular \cite{hk03,ji05} 
& $2^n$ & 2  & $m n$ &2& $m+n+1$ & 2\\
\hline
\end{tabular}
\end{center}
\caption{The complexity
    of boolean operations on prefix-closed and regular languages.}
\label{ta}
\end{table}

\begin{table}[t]
\begin{center}
\begin{tabular}{|ll|cc|cc|cc|} 
\hline
&& concatenation &$|\Sigma|$&star&$|\Sigma|$& reversal & $|\Sigma|$ \\ 
\hline
$\isc$ &on prefix-closed  
 & $m2^{n-1}+2^n-1$ & 3& $2^{n-1}$ & 2& $2^n-1$ &2\\
\hline
$\sc$ & on prefix-closed \cite{bjz14} 
  & $(m+1)2^{n-2}$ & 3 & $2^{n-2}+1$ &3& $2^{n-1}$ &2  \\
\hline
$\sc$ & on regular \cite{ma70,yzs94}  
  & $m2^{n}-2^{n-1}$ &2 & $2^{n-1}+2^{n-2}$ & 2& $2^n$ &2  \\
\hline
\hline
$\nsc$  &on prefix-closed  
 & $m+n$ &3 &  $n$ &2 & $n+1$ &2\\
\hline
$\nsc$ &on regular \cite{hk03,ji05}
 & $m+n$ & 2  & $n+1$ &2& $n+1$ & 2\\
\hline
\end{tabular}
\end{center}
\caption{The complexity of concatenation, star, and reversal on prefix-closed
    and regular languages.}
\label{ta2}
\end{table}

 \nocite{*}
 \bibliographystyle{eptcs}
 \bibliography{allFinal}

\begin{thebibliography}{10}
\providecommand{\bibitemdeclare}[2]{}
\providecommand{\surnamestart}{}
\providecommand{\surnameend}{}
\providecommand{\urlprefix}{Available at }
\providecommand{\url}[1]{\texttt{#1}}
\providecommand{\href}[2]{\texttt{#2}}
\providecommand{\urlalt}[2]{\href{#1}{#2}}
\providecommand{\doi}[1]{doi:\urlalt{http://dx.doi.org/#1}{#1}}
\providecommand{\bibinfo}[2]{#2}

\bibitemdeclare{inproceedings}{auy83}
\bibitem{auy83}
\bibinfo{author}{Alfred~V. \surnamestart Aho\surnameend},
  \bibinfo{author}{Jeffrey~D. \surnamestart Ullman\surnameend} \&
  \bibinfo{author}{Mihalis \surnamestart Yannakakis\surnameend}
  (\bibinfo{year}{1983}): \emph{\bibinfo{title}{On notions of information
  transfer in VLSI circuits}}.
\newblock In \bibinfo{editor}{Johnson} et~al.  \cite{DBLP:conf/stoc/STOC15},
  pp. \bibinfo{pages}{133--139}.
\newblock \urlprefix\url{http://doi.acm.org/10.1145/800061.808742}.

\bibitemdeclare{article}{bi92}
\bibitem{bi92}
\bibinfo{author}{Jean-Camille \surnamestart Birget\surnameend}
  (\bibinfo{year}{1992}): \emph{\bibinfo{title}{Intersection and union of
  regular languages and state complexity}}.
\newblock {\sl \bibinfo{journal}{Inform. Process. Lett.}}
  \bibinfo{volume}{43}(\bibinfo{number}{4}), pp. \bibinfo{pages}{185--190}.
\newblock \urlprefix\url{http://dx.doi.org/10.1016/0020-0190(92)90198-5}.

\bibitemdeclare{article}{bi93}
\bibitem{bi93}
\bibinfo{author}{Jean-Camille \surnamestart Birget\surnameend}
  (\bibinfo{year}{1993}): \emph{\bibinfo{title}{Partial orders on words,
  minimal elements of regular languages and state complexity}}.
\newblock {\sl \bibinfo{journal}{Theoret. Comput. Sci.}}
  \bibinfo{volume}{119}(\bibinfo{number}{2}), pp. \bibinfo{pages}{267--291}.
\newblock \urlprefix\url{http://dx.doi.org/10.1016/0304-3975(93)90160-U}.

\bibitemdeclare{article}{bjl13}
\bibitem{bjl13}
\bibinfo{author}{Janusz~A. \surnamestart Brzozowski\surnameend},
  \bibinfo{author}{Galina \surnamestart Jir{\'a}skov{\'a}\surnameend} \&
  \bibinfo{author}{Baiyu \surnamestart Li\surnameend} (\bibinfo{year}{2013}):
  \emph{\bibinfo{title}{Quotient complexity of ideal languages}}.
\newblock {\sl \bibinfo{journal}{Theoret. Comput. Sci.}} \bibinfo{volume}{470},
  pp. \bibinfo{pages}{36--52}.
\newblock \urlprefix\url{http://dx.doi.org/10.1016/j.tcs.2012.10.055}.

\bibitemdeclare{article}{bjz14}
\bibitem{bjz14}
\bibinfo{author}{Janusz~A. \surnamestart Brzozowski\surnameend},
  \bibinfo{author}{Galina \surnamestart Jir{\'a}skov{\'a}\surnameend} \&
  \bibinfo{author}{Chenglong \surnamestart Zou\surnameend}
  (\bibinfo{year}{2014}): \emph{\bibinfo{title}{Quotient complexity of closed
  languages}}.
\newblock {\sl \bibinfo{journal}{Theory Comput. Syst.}}
  \bibinfo{volume}{54}(\bibinfo{number}{2}), pp. \bibinfo{pages}{277--292}.
\newblock \urlprefix\url{http://dx.doi.org/10.1007/s00224 -013-9515-7}.

\bibitemdeclare{article}{csy02}
\bibitem{csy02}
\bibinfo{author}{Cezar \surnamestart C{\^a}mpeanu\surnameend},
  \bibinfo{author}{Kai \surnamestart Salomaa\surnameend} \&
  \bibinfo{author}{Sheng \surnamestart Yu\surnameend} (\bibinfo{year}{2002}):
  \emph{\bibinfo{title}{Tight lower bound for the state complexity of shuffle
  of regular languages}}.
\newblock {\sl \bibinfo{journal}{J. Autom. Lang. Comb.}}
  \bibinfo{volume}{7}(\bibinfo{number}{3}), pp. \bibinfo{pages}{303--310}.

\bibitemdeclare{article}{ckp}
\bibitem{ckp}
\bibinfo{author}{J.-M. \surnamestart Champarnaud\surnameend},
  \bibinfo{author}{A.~\surnamestart Khorsi\surnameend} \&
  \bibinfo{author}{T.~\surnamestart Parantho\"en\surnameend}:
  \emph{\bibinfo{title}{Split and join for minimizing: Brzozowski's
  algorithm.}}
\newblock \urlprefix\url{http://jmc.feydakins.org/ps/c09psc02.ps}.

\bibitemdeclare{article}{gs96}
\bibitem{gs96}
\bibinfo{author}{Ian \surnamestart Glaister\surnameend} \&
  \bibinfo{author}{Jeffrey \surnamestart Shallit\surnameend}
  (\bibinfo{year}{1996}): \emph{\bibinfo{title}{A lower bound technique for the
  size of nondeterministic finite automata}}.
\newblock {\sl \bibinfo{journal}{Inform. Process. Lett.}}
  \bibinfo{volume}{59}(\bibinfo{number}{2}), pp. \bibinfo{pages}{75--77}.
\newblock \urlprefix\url{http://dx.doi.org/10.1016/0020-0190(96)00095-6}.

\bibitemdeclare{proceedings}{DBLP:conf/csr/2012}
\bibitem{DBLP:conf/csr/2012}
\bibinfo{editor}{Edward~A. \surnamestart Hirsch\surnameend},
  \bibinfo{editor}{Juhani \surnamestart Karhum{\"a}ki\surnameend},
  \bibinfo{editor}{Arto \surnamestart Lepist{\"o}\surnameend} \&
  \bibinfo{editor}{Michail \surnamestart Prilutskii\surnameend}, editors
  (\bibinfo{year}{2012}): \emph{\bibinfo{title}{Computer Science - Theory and
  Applications - 7th International Computer Science Symposium in Russia, CSR
  2012, Nizhny Novgorod, Russia, July 3-7, 2012. Proceedings}}. {\sl
  \bibinfo{series}{Lecture Notes in Computer Science}} \bibinfo{volume}{7353},
  \bibinfo{publisher}{Springer}.
\newblock \urlprefix\url{http://dx.doi.org/10.1007/978-3-642-30642-6}.

\bibitemdeclare{article}{hk03}
\bibitem{hk03}
\bibinfo{author}{Markus \surnamestart Holzer\surnameend} \&
  \bibinfo{author}{Martin \surnamestart Kutrib\surnameend}
  (\bibinfo{year}{2003}): \emph{\bibinfo{title}{Nondeterministic descriptional
  complexity of regular languages}}.
\newblock {\sl \bibinfo{journal}{Internat. J. Found. Comput. Sci.}}
  \bibinfo{volume}{14}(\bibinfo{number}{6}), pp. \bibinfo{pages}{1087--1102}.
\newblock \urlprefix\url{http://dx.doi.org/10.1142/S0129054103002199}.

\bibitemdeclare{book}{hr97}
\bibitem{hr97}
\bibinfo{author}{Juraj \surnamestart Hromkovic\surnameend}
  (\bibinfo{year}{1997}): \emph{\bibinfo{title}{Communication complexity and
  parallel computing}}.
\newblock \bibinfo{publisher}{Springer}.
\newblock \urlprefix\url{http://dx.doi.org/10.1007/978-3-662-03442-2}.

\bibitemdeclare{article}{ji05}
\bibitem{ji05}
\bibinfo{author}{Galina \surnamestart Jir{\'a}skov{\'a}\surnameend}
  (\bibinfo{year}{2005}): \emph{\bibinfo{title}{State complexity of some
  operations on binary regular languages}}.
\newblock {\sl \bibinfo{journal}{Theoret. Comput. Sci.}}
  \bibinfo{volume}{330}(\bibinfo{number}{2}), pp. \bibinfo{pages}{287--298}.
\newblock \urlprefix\url{http://dx.doi.org/10.1016/j.tcs.2004.04.011}.

\bibitemdeclare{inproceedings}{ji12}
\bibitem{ji12}
\bibinfo{author}{Galina \surnamestart Jir{\'a}skov{\'a}\surnameend}
  (\bibinfo{year}{2012}): \emph{\bibinfo{title}{Descriptional complexity of
  operations on alternating and boolean automata}}.
\newblock In \bibinfo{editor}{Hirsch} et~al.  \cite{DBLP:conf/csr/2012}, pp.
  \bibinfo{pages}{196--204}.
\newblock \urlprefix\url{http://dx.doi.org/10.1007/978-3-642-30642-6_19}.

\bibitemdeclare{article}{jm11}
\bibitem{jm11}
\bibinfo{author}{Galina \surnamestart Jir{\'a}skov{\'a}\surnameend} \&
  \bibinfo{author}{Tom{\'a}\v{s} \surnamestart Masopust\surnameend}
  (\bibinfo{year}{2011}): \emph{\bibinfo{title}{Complexity in union-free
  regular languages}}.
\newblock {\sl \bibinfo{journal}{Internat. J. Found. Comput. Sci.}}
  \bibinfo{volume}{22}(\bibinfo{number}{7}), pp. \bibinfo{pages}{1639--1653}.
\newblock \urlprefix\url{http://dx.doi.org/10.1142/S0129054111008933}.

\bibitemdeclare{article}{jo08}
\bibitem{jo08}
\bibinfo{author}{Galina \surnamestart Jir{\'a}skov{\'a}\surnameend} \&
  \bibinfo{author}{Alexander \surnamestart Okhotin\surnameend}
  (\bibinfo{year}{2008}): \emph{\bibinfo{title}{State complexity of cyclic
  shift}}.
\newblock {\sl \bibinfo{journal}{RAIRO Theor. Inform. Appl.}}
  \bibinfo{volume}{42}(\bibinfo{number}{2}), pp. \bibinfo{pages}{335--360}.
\newblock \urlprefix\url{http://dx.doi.org/10.1051/ita:2007038}.

\bibitemdeclare{article}{js12}
\bibitem{js12}
\bibinfo{author}{Galina \surnamestart Jir{\'a}skov{\'a}\surnameend} \&
  \bibinfo{author}{Juraj \surnamestart \v{S}ebej\surnameend}
  (\bibinfo{year}{2012}): \emph{\bibinfo{title}{Reversal of binary regular
  languages}}.
\newblock {\sl \bibinfo{journal}{Theoret. Comput. Sci.}} \bibinfo{volume}{449},
  pp. \bibinfo{pages}{85--92}.
\newblock \urlprefix\url{http://dx.doi.org/10.1016/j.tcs.2012.05.008}.

\bibitemdeclare{proceedings}{DBLP:conf/stoc/STOC15}
\bibitem{DBLP:conf/stoc/STOC15}
\bibinfo{editor}{David~S. \surnamestart Johnson\surnameend},
  \bibinfo{editor}{Ronald \surnamestart Fagin\surnameend},
  \bibinfo{editor}{Michael~L. \surnamestart Fredman\surnameend},
  \bibinfo{editor}{David \surnamestart Harel\surnameend},
  \bibinfo{editor}{Richard~M. \surnamestart Karp\surnameend},
  \bibinfo{editor}{Nancy~A. \surnamestart Lynch\surnameend},
  \bibinfo{editor}{Christos~H. \surnamestart Papadimitriou\surnameend},
  \bibinfo{editor}{Ronald~L. \surnamestart Rivest\surnameend},
  \bibinfo{editor}{Walter~L. \surnamestart Ruzzo\surnameend} \&
  \bibinfo{editor}{Joel~I. \surnamestart Seiferas\surnameend}, editors
  (\bibinfo{year}{1983}): \emph{\bibinfo{title}{Proceedings of the 15th Annual
  ACM Symposium on Theory of Computing, 25-27 April, 1983, Boston}}.
  \bibinfo{publisher}{ACM}.

\bibitemdeclare{article}{krs09}
\bibitem{krs09}
\bibinfo{author}{Jui-Yi \surnamestart Kao\surnameend}, \bibinfo{author}{Narad
  \surnamestart Rampersad\surnameend} \& \bibinfo{author}{Jeffrey \surnamestart
  Shallit\surnameend} (\bibinfo{year}{2009}): \emph{\bibinfo{title}{On NFAs
  where all states are final, initial, or both}}.
\newblock {\sl \bibinfo{journal}{Theoret. Comput. Sci.}}
  \bibinfo{volume}{410}(\bibinfo{number}{47-49}), pp.
  \bibinfo{pages}{5010--5021}.
\newblock \urlprefix\url{http://dx.doi.org/10.1016/j.tcs.2009.07.049}.

\bibitemdeclare{article}{le81}
\bibitem{le81}
\bibinfo{author}{Ernst~L. \surnamestart Leiss\surnameend}
  (\bibinfo{year}{1981}): \emph{\bibinfo{title}{Succint representation of
  regular languages by boolean automata}}.
\newblock {\sl \bibinfo{journal}{Theoret. Comput. Sci.}} \bibinfo{volume}{13},
  pp. \bibinfo{pages}{323--330}.
\newblock \urlprefix\url{http://dx.doi.org/10.1016/S0304-3975(81)80005-9}.

\bibitemdeclare{article}{ma70}
\bibitem{ma70}
\bibinfo{author}{A.~N. \surnamestart Maslov\surnameend} (\bibinfo{year}{1970}):
  \emph{\bibinfo{title}{Estimates of the number of states of finite automata}}.
\newblock {\sl \bibinfo{journal}{Dokl. Akad. Nauk SSSR}} \bibinfo{volume}{194},
  pp. \bibinfo{pages}{1266--1268 (Russian)}.
\newblock \bibinfo{note}{English translation: Soviet Math. Dokl. {\bf 11}
  (1970) 1373--1375}.

\bibitemdeclare{book}{ma13}
\bibitem{ma13}
\bibinfo{author}{Tom\'a\v{s} \surnamestart Masopust\surnameend}
  (\bibinfo{year}{2010}): \emph{\bibinfo{title}{Personal communication}}.

\bibitemdeclare{article}{mi66}
\bibitem{mi66}
\bibinfo{author}{B.~G. \surnamestart Mirkin\surnameend} (\bibinfo{year}{1970}):
  \emph{\bibinfo{title}{On dual automata}}.
\newblock {\sl \bibinfo{journal}{Kibernetika (Kiev)}} \bibinfo{volume}{2}, pp.
  \bibinfo{pages}{7--10 (Russian)}.
\newblock \urlprefix\url{http://dx.doi.org/10.1007/BF01072247}.
\newblock \bibinfo{note}{English translation: Cybernetics {\bf 2}, (1966)
  6--9}.

\bibitemdeclare{article}{ra06}
\bibitem{ra06}
\bibinfo{author}{Narad \surnamestart Rampersad\surnameend}
  (\bibinfo{year}{2006}): \emph{\bibinfo{title}{The state complexity of
  L$^{\mbox{2}}$ and L$^{\mbox{k}}$}}.
\newblock {\sl \bibinfo{journal}{Inform. Process. Lett.}}
  \bibinfo{volume}{98}(\bibinfo{number}{6}), pp. \bibinfo{pages}{231--234}.
\newblock \urlprefix\url{http://dx.doi.org/10.1016/j.ipl.2005.06.011}.

\bibitemdeclare{book}{si97}
\bibitem{si97}
\bibinfo{author}{Michael \surnamestart Sipser\surnameend}
  (\bibinfo{year}{1997}): \emph{\bibinfo{title}{Introduction to the theory of
  computation}}.
\newblock \bibinfo{publisher}{PWS Publishing Company}.

\bibitemdeclare{article}{yzs94}
\bibitem{yzs94}
\bibinfo{author}{Sheng \surnamestart Yu\surnameend}, \bibinfo{author}{Qingyu
  \surnamestart Zhuang\surnameend} \& \bibinfo{author}{Kai \surnamestart
  Salomaa\surnameend} (\bibinfo{year}{1994}): \emph{\bibinfo{title}{The state
  complexities of some basic operations on regular languages}}.
\newblock {\sl \bibinfo{journal}{Theoret. Comput. Sci.}}
  \bibinfo{volume}{125}(\bibinfo{number}{2}), pp. \bibinfo{pages}{315--328}.
\newblock \urlprefix\url{http://dx.doi.org/10.1016/0304-3975(92)00011-F}.

\end{thebibliography}

\end{document}